\newcommand{\ceil}[1]{{\left\lceil#1  \right\rceil}}
\newcommand{\comment}[1]{}
\newcommand{\ci}{{\mathcal{I}}}
\newcommand{\cA}{{\mathcal{A}}}
\newcommand{\cI}{{\mathcal{I}}}
\newcommand{\OPT}{\textnormal{OPT}}
\newcommand{\eps}{{\varepsilon}}
\DeclareMathOperator*{\argmax}{arg\,max}
\begin{document}
\newtheorem{thm}{Theorem}[section]
\newtheorem{prop}[thm]{Proposition}
\newtheorem{assm}[thm]{Assumption}
\newtheorem{conj}[thm]{Conjecture}
\newtheorem{lem}[thm]{Lemma}
\newtheorem{observation}[thm]{Observation}
\newtheorem{cor}[thm]{Corollary}
 \newtheorem{lemma}[thm]{Lemma}
  \newtheorem{definition}[thm]{Definition}
 \newtheorem{theorem}[thm]{Theorem}
 \newtheorem{proposition}[thm]{Proposition}
 \newtheorem{claim}[thm]{Claim}
\newtheorem{defn}[thm]{Definition}
\newcommand{\ariel}[1]{{\color{red} (Ariel :#1)}}
\def \II   {{\mathcal I}}
\newcommand{\one}{\mathbbm{1}}

\title{
Tight Bounds for Budgeted Maximum Weight Independent Set in Bipartite and Perfect Graphs

}

\author{Ilan Doron-Arad\thanks{Computer Science Department, 
		Technion, Haifa, Israel. \texttt{idoron-arad@cs.technion.ac.il}}
	\and 
	Hadas Shachnai\thanks{Computer Science Department, 
		Technion, Haifa, Israel. \texttt{hadas@cs.technion.ac.il}}
}

\date{} 
\maketitle

\begin{abstract}
	We consider the classic {\em budgeted maximum weight independent set (BMWIS)} problem. The input is a graph $G = (V,E)$,  a weight function $w:V \rightarrow \mathbb{R}_{\geq 0}$, a cost function $c:V \rightarrow \mathbb{R}_{\geq 0}$, and a budget $B \in \mathbb{R}_{\geq 0}$. The goal is to find an independent set $S \subseteq V$ in $G$ such that $\sum_{v \in S} c(v) \leq B$, which maximizes the total weight $\sum_{v \in S} w(v)$. Since the problem on general graphs cannot be approximated within ratio $|V|^{1-\eps}$ for any $\eps>0$, BMWIS has attracted significant attention on graph families for which a maximum weight independent set can be computed in polynomial time. Two notable such graph families are bipartite and perfect graphs. BMWIS is known to be NP-hard on both of these graph families; however, the best possible  approximation guarantees for these graphs are wide open.
	
In this paper, we give 
a tight $2$-approximation for BMWIS on perfect graphs and bipartite graphs.
In particular, 
we give 
We a $(2-\eps)$ lower bound for BMWIS on bipartite graphs, already for the special case where the budget is replaced by a cardinality constraint, based on the {\em Small Set Expansion Hypothesis (SSEH)}. For the upper bound, we design a $2$-approximation for BMWIS on perfect graphs using
 a {\em Lagrangian relaxation} based technique. 
Finally, we obtain 
a tight lower bound for  the {\em capacitated maximum weight independent set (CMWIS)} problem, the special case of BMWIS where $w(v) = c(v)~\forall v \in V$. We show that CMWIS on bipartite and perfect graphs is unlikely to admit an
{\em efficient polynomial-time approximation scheme (EPTAS)}. 
Thus, the existing PTAS for CMWIS is essentially the best we can expect.
\end{abstract}

\section{Introduction}

We consider the {\em budgeted maximum weight independent set (BMWIS)} problem, defined as follows.\footnote{In \cite{pferschy2009knapsack,pferschy2017approximation} and several other papers the BMWIS problem is named {\em knapsack with a conflict graph (KCG)}.} The input is a graph $G = (V,E)$,  a weight function $w:V \rightarrow \mathbb{R}_{\geq 0}$, a cost function $c:V \rightarrow \mathbb{R}_{\geq 0}$, and a budget $B \in \mathbb{R}_{\geq 0}$. A {\em solution} is an independent set $S \subseteq V$ in $G$ such that $c(S) = \sum_{v \in S} c(v) \leq B$.\footnote{For any function $f:A \rightarrow \mathbb{R}_{\geq 0}$ and a subset of elements $C \subseteq A$, we define $f(C) = \sum_{e \in C} f(e)$.} The {\em weight} of a solution $S$ is $w(S) = \sum_{v \in S} w(v)$, and the goal is to find a solution of maximum weight. 
For short, we use MS for BMWIS, MSP for MS where the given graph $G$ is perfect, and MSB for MSP if $G$ is also bipartite (a special case of perfect graph).    

MS is a natural generalizations of the classic {\em maximum weight independent set (MWIS)} problem (where $B = \infty$) and the $0/1$-{\em knapsack} problem (for $E = \emptyset$), which are two cornerstone problems in theoretical computer science.
MS finds applications in {\em job scheduling} \cite{kleinberg2006algorithm} as well as in selecting a {\em non-interfering set of transmitters}~ \cite{andrews2007overcoming,suh2008interference,viswanathan2003downlink} (see, e.g.,~\cite{bandyapadhyay2014variant} for other applications). 

In this paper, we study the approximability of MSP  and MSB.
Let $\OPT(I)$ be the value of an optimal solution for an instance $I$ of a maximization problem~$\Pi$. 
For $\alpha \geq 1$, we say that $\cA$ is an $\alpha$-approximation algorithm
for $\Pi$ if, for any instance $I$ of $\Pi$, $\cA$ outputs a solution of value at least  $\alpha \cdot  \OPT(I)$. Being a generalization of MWIS, there is no $|V|^{1-\eps}$-approximation for MS unless NP=ZPP \cite{hastad1996clique}. As a result, there has been an interesting line of research on MS for graph families on which MWIS admits polynomial-time algorithms \cite{pferschy2009knapsack,bandyapadhyay2014variant,kalra2017maximum,pferschy2017approximation}. While MSP and MSB are known to be strongly NP-hard~\cite{pferschy2009knapsack}, there is currently no definitive answer regarding the best approximation guarantees for these intriguing problems. In particular, the only lower bound for MSB is strong NP-hardness~\cite{pferschy2009knapsack} and no non-trivial upper bound for MSP is known. 



Our main result is a tight $2$-approximation for BMWIS on perfect graphs and bipartite graphs. We first improve
the lower bound for MSB from strong NP-hardness \cite{pferschy2009knapsack} to $2-\eps$. In our reduction, we are able to utilize the strong hardness result of the {\em maximal balanced biclique} problem \cite{manurangsi2017inapproximability}. 
We first 
convert the maximal balanced biclique instance into an {\em unbalanced} biclique instance.
We then construct an MSB instance for which any solution that takes vertices only from one side of the graph achieves at most (roughly) a $2$-approximation. Our lower bound holds already for the special case of uniform costs, thus answering negatively an open question of~\cite{kalra2017maximum}.

\begin{theorem}
	\label{thm:KP}
	Assuming the {\em Small Set Expansion Hypothesis (\textnormal{SSEH})}, for any $0<\eps<1$ there is no $(2-\eps)$-approximation for \textnormal{MSB}. 
\end{theorem}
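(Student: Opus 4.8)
The plan is to reduce from the \emph{maximum balanced biclique} problem, whose SSEH-hardness is supplied by Manurangsi's inapproximability result, and to exploit the elementary correspondence between bicliques and independent sets in the \emph{bipartite complement}. If $H = (L \cup R, E_H)$ is bipartite and $\bar H$ is the bipartite graph on $L \cup R$ whose edges are exactly the non-edges of $H$ across $(L,R)$, then a set $A \cup B$ with $A \subseteq L$ and $B \subseteq R$ is independent in $\bar H$ if and only if $(A,B)$ is a biclique of $H$. Consequently, the feasible solutions of an MSB instance built on $\bar H$ under a cardinality budget $k$ are precisely the bicliques of $H$ of total size at most $k$, and a balanced biclique corresponds to a solution that draws from \emph{both} sides $L,R$ in equal measure.

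First I would record the required hardness in a convenient promise form: for every small $\eps$ it is SSEH-hard to distinguish a bipartite graph containing a large biclique from one in which every biclique is unbalanced and small. The difficulty with feeding this directly into $\bar H$ under unit weights is that each bipartition class is \emph{itself} an independent set, so a trivial one-sided solution can always collect the $k$ heaviest vertices of a single side; with uniform weights this already matches any two-sided (biclique) solution and no gap survives. This is exactly why the reduction first \emph{converts the balanced instance into an unbalanced one}: by padding/replicating one side and choosing per-side weights together with the budget $k$, one arranges that (i) each side alone yields weight at most a target value $W$, while (ii) a genuine balanced biclique yields weight $\approx 2W$. Concretely I would tune the two per-side weights and the budget so that the full one-sided solution is worth exactly $W$ while a biclique spending roughly half the budget on each side is worth $2W$; the unbalanced conversion is what lets these two requirements coexist despite the tension just noted.

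The core of the argument is the soundness (NO) analysis: I must show that when $H$ has no large balanced biclique, \emph{every} feasible solution has weight at most $(1+\eps)W$. I would write a solution as $A \cup B$ with $A \subseteq L$ and $B \subseteq R$, observe that feasibility forces $(A,B)$ to be a biclique of $H$, and invoke the soundness guarantee to conclude that $\min(|A|,|B|)$, measured in weight, is negligible. Hence the solution is essentially one-sided: its weight is at most the one-sided bound $W$ plus the $\eps W$ contribution of the small side. Combining this with completeness (a large balanced biclique realizes $\approx 2W$) yields instances whose optima differ by a factor $2-\eps$, so any $(2-\eps)$-approximation would decide the biclique promise problem, contradicting SSEH. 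Since the entire construction uses a pure cardinality budget (uniform costs), this simultaneously settles the uniform-cost open question of Kalra et al.

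I expect the main obstacle to be precisely the completeness-versus-one-sided tension: because a single bipartition class is trivially independent, the $k$ heaviest vertices of one side are always available to a trivial algorithm, so the construction must at once (a) make a balanced biclique worth $2W$, (b) cap \emph{every} one-sided solution at $W$, and (c) cap every two-sided solution in the NO case at $(1+\eps)W$, where (c) must hold uniformly over all ways of splitting the budget between the two sides. Reconciling (a) and (b) is what forces the unbalanced conversion and the particular weights and budget, and verifying (c) across every budget split using only the no-large-balanced-biclique guarantee is the technically delicate step.
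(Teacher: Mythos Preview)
Your outline matches the paper's strategy: start from Manurangsi's balanced-biclique hardness, pass to the bipartite complement, and tune per-side weights and a cardinality budget so that a large balanced biclique is worth $\approx 2W$ while every one-sided or NO-case solution is worth $\approx W$. You also correctly isolate the central tension between (a), (b), and (c).

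The gap is in the ``unbalanced conversion,'' which you describe as ``padding/replicating one side.'' Plain replication does not do the job: if each $r\in R$ is copied $m$ times, the YES biclique grows from $K_{(1/2-\delta)n,\,(1/2-\delta)n}$ to $K_{(1/2-\delta)n,\,m(1/2-\delta)n}$, but it still occupies only a $(1/2-\delta)$ fraction of the enlarged side, so a two-sided solution cannot extract more from that side than the trivial one-sided solution already does, and (a) and (b) remain incompatible. The paper's actual conversion (its Lemma~3.2) is an OR-tensor: replace $R$ by the set $B$ of all length-$t$ tuples over $R$ with $t=\delta^{-1/2}$, and connect $\ell\in L$ to a tuple iff $\ell$ is adjacent to \emph{some} coordinate. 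Then a YES biclique covering a $(1/2-\delta)$ fraction of $R$ covers a $1-(1/2+\delta)^{t}\approx 1$ fraction of $B$, while in the NO case any $\delta n$ vertices of $L$ are still joined to only a $\delta^{1/10}$ fraction of $B$. This near-all-versus-near-none asymmetry on the large side is precisely what allows the weights and budget to satisfy (a), (b), (c) simultaneously; only \emph{after} this amplification does the paper replicate the \emph{small} side $A$ (into $q$ copies) to match cardinalities against the budget. Your plan has the right architecture, but the OR-tensor amplification is the missing idea without which the construction cannot close.
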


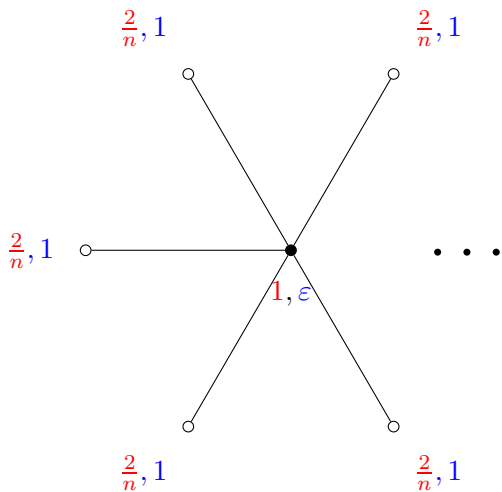
\begin{figure}

	\hspace{4cm}{
		\begin{tikzpicture}[scale=1.8]
			\node[circle, draw=black, fill=black, inner sep=0pt, minimum size=4pt, label={[label distance=0.2cm]270:{$\textcolor{red}{1},\textcolor{blue}{\varepsilon}$}}] (center) at (0,0) {};
			
			\foreach \i in {1,2,...,5}
			\node[circle, draw=black, fill=white, inner sep=1pt, minimum size=4pt, label={[label distance=0.2cm, blue]60*\i:{$\textcolor{red}{\frac{2}{n}},\textcolor{blue}{1}$}}] (node\i) at (60*\i:1.5cm) {};
			
			\foreach \i in {1,2,...,5}
			\draw (center) -- (node\i);
			
			\node[circle, xshift = -40, draw=none, fill=none, inner sep=0pt, minimum size=4pt, label={[label distance=-0.4cm]300:{\Huge $\cdots$}}] (dots) at (00:1.8cm) {};
		\end{tikzpicture}
	}
	
	\caption{	\label{fig:1} An example for the difference between MWIS and BMWIS. In the figure there is a graph with $n+1$ vertices; each vertex has a weight (left number, in red) and a cost (right number, in blue). The optimal solution of the MWIS problem w.r.t. the weights is to take all vertices except the center vertex, which gives a total weight of $2$. For the MS problem with a budget of $B = 1$, the optimal solution is the center vertex. Any other vertex gives a solution of weight $\frac{2}{n}$ for the BMWIS problem.}
\end{figure}

To derive an approximation algorithm for MSP, one may be tempted to reduce the problem to an MWIS instance on a perfect graph, which admits a polynomial-time exact algorithm \cite{grotschel2012geometric}. More specifically, we can find an optimal solution $S$ for the induced MWIS instance obtained by removing the budget constraint. Then, to obtain a feasible solution for the MSP instance (i.e., satisfy the budget constraint) we keep a subset of vertices $T \subseteq S$ of maximum weight $w(T)$, such that the budget constraint is satisfied, i.e., $c(T) \leq B$. However, this idea may lead to an arbitrarily poor approximation guarantee for MSP. We give a simple example in Figure~\ref{fig:1}. 

Instead of the above approach, we reduce the MSP problem to a {\em subset selection} problem with a budget constraint~\cite{kulik2021lagrangian}, relying on the {\em Lagrangian relaxation} of the MSP problem. This, combined with Theorem~\ref{thm:KP}, gives the tight upper bound for MSP and MBS (recall that MSB is a special case of MSP). 



 \begin{theorem}
	\label{thm:MSP}
	There is a $2$-approximation for \textnormal{MSP}.
\end{theorem}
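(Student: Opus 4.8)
The plan is to avoid the naive ``solve MWIS, then trim to the budget'' route (which fails, as Figure~\ref{fig:1} shows) and instead cast MSP as an instance of the budgeted \emph{subset selection} problem of \cite{kulik2021lagrangian}: the ground set is $V$, the feasible family is the family of independent sets of $G$ (which is down-closed), the linear objective is $w$, and the single knapsack constraint is $c(S)\le B$. The framework of \cite{kulik2021lagrangian} reduces a budgeted subset selection problem to its \emph{Lagrangified} counterpart---for a multiplier $\lambda\ge 0$, find an independent set maximizing $w(S)-\lambda c(S)$---and converts a $\rho$-approximation for the latter into a $(\rho+1)$-approximation for the former. First I would supply the required oracle: given modified weights $w-\lambda c$, discard every vertex of nonpositive modified weight and run the polynomial-time exact MWIS algorithm on the induced subgraph. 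Since induced subgraphs of perfect graphs are perfect and MWIS is solvable exactly in polynomial time on perfect graphs \cite{grotschel2012geometric}, this oracle is \emph{exact}, i.e.\ $\rho=1$, so the framework returns a $2$-approximation.

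Unpacking why exactness yields the factor $2$, set $g(\lambda)=\max\{w(S)-\lambda c(S): S \text{ independent}\}$. Weak duality gives $g(\lambda)+\lambda B\ge \OPT$ for every $\lambda\ge 0$, since the optimal feasible $S^\star$ satisfies $g(\lambda)\ge w(S^\star)-\lambda c(S^\star)\ge \OPT-\lambda B$. I would then binary-search for a value $\lambda^\star$ at which the oracle exposes two optimal Lagrangian solutions straddling the budget: a feasible $S_1$ with $c(S_1)\le B$ and an infeasible $S_2$ with $c(S_2)>B$, both attaining $g(\lambda^\star)$, so $w(S_i)=g(\lambda^\star)+\lambda^\star c(S_i)$. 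Writing $B=\alpha\,c(S_1)+(1-\alpha)\,c(S_2)$ with $\alpha\in(0,1]$ and taking the corresponding convex combination,
\begin{equation*}
\alpha\, w(S_1)+(1-\alpha)\, w(S_2)=g(\lambda^\star)+\lambda^\star B\ge \OPT,
\end{equation*}
so at least one of $S_1,S_2$ carries weight at least $\OPT/2$.

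The hard part is that the high-weight solution among the two may be the infeasible $S_2$, and making it feasible by deleting elements can destroy almost all of its weight if a single expensive-but-heavy vertex must be removed (the knapsack ``overflow element'' phenomenon). The framework of \cite{kulik2021lagrangian} neutralizes this by a partial-enumeration preprocessing step: guess the constant number of highest-cost vertices of the optimum, fix them in the solution, and delete all costlier vertices before running the Lagrangian search, so that every element retained from $S_2$ is comparatively cheap and the lone discarded overflow vertex can be charged against the already-selected weight; this is what preserves at least half of the relevant weight while keeping the output feasible. A secondary point requiring care is guaranteeing that a suitable $\lambda^\star$ exists and is located in polynomial time---this follows from the piecewise-linear structure of $g$, whose breakpoints are polynomially bounded, so the binary search terminates either with the two straddling solutions or with a single feasible optimum (an even better case). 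Combining the resulting $2$-approximation with the matching $(2-\eps)$ lower bound of Theorem~\ref{thm:KP} then shows the guarantee is tight.
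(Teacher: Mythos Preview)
Your approach is essentially the paper's: both invoke the Lagrangian subset-selection framework of \cite{kulik2021lagrangian} with the exact MWIS algorithm on perfect graphs (Lemma~\ref{lem:grot}) as the unconstrained oracle, so that $\rho=1$. The one difference is how the $\eps$-loss built into Lemma~\ref{thm:ariel} is eliminated: the paper simply sets $\eps=1/(8|V|W(\ci))$, scales all weights to integers so that $\OPT(\ci)\in\mathbb{N}$ and $w(S)\in\mathbb{N}$, and observes that $w(S)\ge \OPT(\ci)/2-1/8$ together with integrality forces $w(S)\ge \lceil\OPT(\ci)/2\rceil$; you instead unpack the Lagrangian machinery and appeal to partial enumeration for a clean factor~$2$. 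One slip in your sketch: the enumeration step should guess the highest-\emph{weight} vertices of the optimum, not the highest-cost ones---what must be controlled is the \emph{weight} lost when discarding the overflow element from $S_2$, and making that element cheap in cost does nothing to bound its weight.
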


 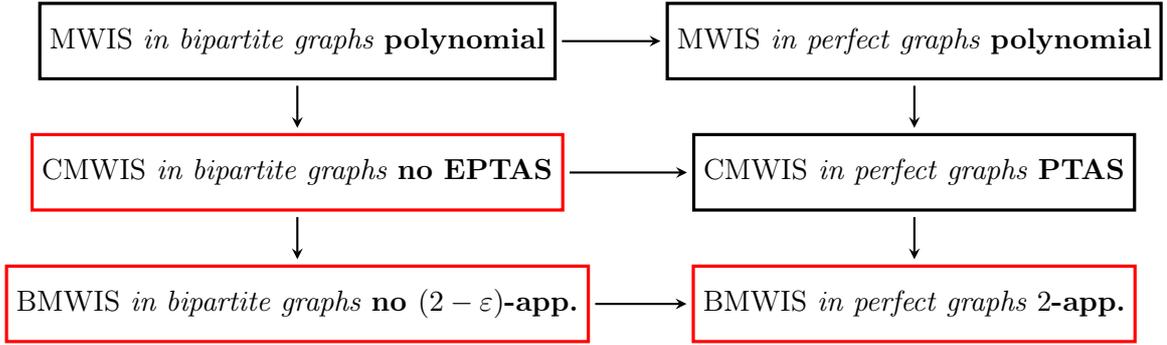
\begin{figure}\hspace*{0.8cm}
	\centering
	\label{fig:diagram}
	\begin{tikzpicture}[
		terminal/.style={
			rectangle,
			minimum width=2cm,
			minimum height=1cm,
			very thick,
			draw=black,
			font=\itshape,
		},
		]
		\matrix[row sep=0.7cm,column sep=1cm] {%
			; \node [terminal](b1) {\textnormal{MWIS} in bipartite graphs {\bf polynomial}}; &  \node [terminal](p1) {\textnormal{MWIS} in perfect graphs {\bf polynomial}};\\
			
			
			
			; \node [terminal,draw=red](b2) {\textnormal{CMWIS} in bipartite graphs {\bf no EPTAS}}; &  \node [terminal](p2) {\textnormal{CMWIS} in perfect graphs {\bf PTAS}};\\
			
		; \node [terminal,draw=red](b3) {\textnormal{BMWIS} in bipartite graphs {\bf no $\left(2-\eps\right)$-app.}}; &  \node [terminal,draw=red](p3) {\textnormal{BMWIS} in perfect graphs {\bf $2$-app.}};\\
		
		
		\\
		\\
	};
	\draw (b1) edge [->,>=stealth,shorten <=2pt, shorten >=2pt, thick] (b2);
	\draw (b1) edge [->,>=stealth,shorten <=2pt, shorten >=2pt, thick] (p1);
	\draw (b2) edge [->,>=stealth,shorten <=2pt, shorten >=2pt, thick] (p2);
	\draw (b3) edge [->,>=stealth,shorten <=2pt, shorten >=2pt, thick] (p3);
	\draw (p1) edge [->,>=stealth,shorten <=2pt, shorten >=2pt, thick] (p2);
	\draw (b2) edge [->,>=stealth,shorten <=2pt, shorten >=2pt, thick] (b3);
	\draw (p2) edge [->,>=stealth,shorten <=2pt, shorten >=2pt, thick] (p3);
\end{tikzpicture}
\vspace{-1.5cm}
\caption{\label{fig:sum} An overview on MWIS variants including our new tight bounds (in red), where {\em app.} is an abbreviation for approximation. An arrow from $A$ to $B$ indicates that $A$ is a special case of $B$.}
\end{figure}

We complement the above with a tight lower bound for the {\em capacitated maximum weight independent set (CMWIS)} on bipartite and perfect graphs; this is the special case of BMWIS where the weight of each vertex is equal to the cost $w(v) = c(v)~\forall v \in V$. We show that CMWIS on bipartite graphs is unlikely to admit an EPTAS; thus, the existing PTAS for CMWIS on general perfect graphs \cite{doron2023approximating} essentially cannot be improved. 

	\begin{theorem}
	\label{lem:EPTAS}
	Unless \textnormal{FPT=W[1]}, there is no \textnormal{EPTAS} for \textnormal{CMWIS} in bipartite graphs.\footnote{For more details on the relevant parametrized complexity assumptions, see, e.g., \cite{cygan2015parameterized}.}
\end{theorem}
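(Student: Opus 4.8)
The plan is to rule out an EPTAS by a \emph{gap-producing parameterized reduction} from \textsc{Multicolored Clique}, which is W[1]-hard when parameterized by the number $k$ of color classes. The governing observation is that an EPTAS runs in time $f(1/\eps)\cdot n^{O(1)}$: if from a \textsc{Multicolored Clique} instance with parameter $k$ I can build, in polynomial time, a bipartite CMWIS instance together with a target accuracy $\eps=\eps(k)$ \emph{that depends on $k$ alone} such that any $(1-\eps(k))$-approximate solution decides the clique instance, then a hypothetical EPTAS would decide \textsc{Multicolored Clique} in time $f(1/\eps(k))\cdot n^{O(1)}=g(k)\cdot n^{O(1)}$, i.e.\ in FPT time, and hence FPT $=$ W[1]. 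I want to stress at the outset the feature that separates this statement from the (false) claim ``no PTAS'': the yes/no gap must be $1+\Omega(1/\mathrm{poly}(k))$, a \emph{coarse} gap controlled by the parameter, never a $1+\Omega(1/\mathrm{poly}(n))$ gap---the latter would contradict the PTAS that CMWIS is known to admit.

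For the construction, take the source graph $H$ with classes $V_1,\dots,V_k$ (each of size $n$) and the goal of choosing one vertex per class so that the chosen vertices are pairwise adjacent. The budget is conceptually split into $\binom{k}{2}$ equal \emph{edge-slots}, one per color pair $(i,j)$, each worth a large value $\Lambda=\Theta(B/k^2)$. For every edge $ab\in E(H)$ with $a\in V_i$, $b\in V_j$ I create a CMWIS-vertex $y_{ab}$ of cost (= weight, as CMWIS demands) equal to $\Lambda$; filling slot $(i,j)$ means taking exactly one such $y_{ab}$. I set $B=\binom{k}{2}\Lambda+(\text{lower-order selection terms})$ so that all $\binom{k}{2}$ slots can be filled simultaneously exactly when the selected vertices form a multicolored clique. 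Then in a yes-instance $\OPT=B$, whereas in a no-instance at least one color pair is non-adjacent, so at least one slot cannot be consistently filled and $\OPT\le B-\Lambda=B\bigl(1-\Theta(1/k^2)\bigr)$. Fixing $\eps(k)=\Theta(1/k^2)$ below this threshold makes a $(1-\eps(k))$-approximation separate the two cases, which is exactly what the EPTAS-to-FPT implication needs.

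The role of the bipartite conflict graph is to enforce \emph{consistency}: the $y_{ab}$'s collected across all slots must agree, for each color $i$, on a single selected vertex of $V_i$, and the selection must indeed be one vertex per class. I would realize the basic incidence constraints by letting $y_{ab}$ conflict with the selection of any vertex $a'\neq a$ of $V_i$ (and symmetrically on the $V_j$ side), and I would arrange the two-sided layout---selection vertices on one side, the $y$-vertices on the other---so that these incidence edges keep $G$ bipartite, mirroring the bipartite layout that already works for the weight function.

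The main obstacle is precisely this consistency-versus-bipartiteness tension. Two slots $(i,j)$ and $(i,j')$ that share color $i$ must select the \emph{same} vertex of $V_i$, a constraint that naively calls for conflicts among the $y$-vertices and threatens odd cycles, destroying bipartiteness; routing all such agreements through auxiliary selection vertices while preserving two-colorability is the delicate heart of the reduction. Equally delicate is keeping the gap coarse: I must guarantee that violating the clique condition forfeits a full $\Lambda=\Theta(B/k^2)$ chunk and that no ``junk'' independent set recovers this loss by filling budget slack with spurious vertices, \emph{without} introducing any $\Theta(n)$-sized forced baseline into $B$---such a baseline would dilute the gap to $1+\Theta(1/\mathrm{poly}(n))$ and wrongly collide with the known PTAS. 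Balancing these competing requirements---exact $\binom{k}{2}$-edge verification, bipartite conflicts, and a budget whose total magnitude stays $\Theta(\mathrm{poly}(k)\cdot\Lambda)$---is where essentially all the work lies.
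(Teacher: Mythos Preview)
Your high-level framework is correct and matches the paper: an EPTAS run with accuracy $\eps=\eps(k)$ depending only on the parameter would decide a W[1]-hard problem in FPT time. Where you diverge from the paper is the choice of source problem, and this is also where your proposal has a genuine gap. You yourself name the obstacle---enforcing that all slots touching color $i$ agree on a single $V_i$-vertex, and that at most one $y_{ab}$ is picked per color-pair---and you do not resolve it. Both constraints naturally demand conflicts \emph{within} one side of the bipartition (among the $y$-vertices, or among selection vertices of a class), and ``routing through auxiliary selection vertices'' is not a construction: those auxiliaries need weight, and if their total weight scales with $n$ the gap collapses to $1+\Theta(1/\mathrm{poly}(n))$, while if it does not you have not explained how $k$ budget-units of selection gadgetry distinguish among $n$ candidates per class without intra-side edges. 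As written, the reduction is a plan with its hardest step left open.

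The paper bypasses this obstacle entirely by choosing a different W[1]-hard source: balanced $k$-biclique in bipartite graphs (Lin, 2014). Given $G=(L,R,E)$ and $k$, the CMWIS instance is simply the bipartite complement $H=(L,R,\bar E)$, so independent sets of $H$ are exactly bicliques of $G$---bipartiteness and consistency come for free. What remains is forcing a solution of weight exactly $B$ to be \emph{balanced}, and this is done purely arithmetically: set $w(\ell)=4k^2+2k$ for $\ell\in L$, $w(r)=4k^2+1$ for $r\in R$, and $B=k(4k^2+2k)+k(4k^2+1)=8k^3+2k^2+k$. A short case analysis shows that any $S$ with $|S|\neq 2k$ or $|S\cap L|\neq k$ has $w(S)\neq B$, hence $\OPT(I)=B$ iff $G$ contains $K_{k,k}$. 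Taking $\eps=1/(12k^3)$ makes a $(1-\eps)$-approximation exact (integer weights), which is enough. The lesson is that the right source problem makes the bipartite conflict structure trivial; your Multicolored Clique route would have to manufacture it.
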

See Figure~\ref{fig:sum} for a summary of our results. 
\subsection{Related Work}

In this section we focus on known approximation algorithms for MS and other variants of MWIS; for exact algorithms and heuristics see, e.g., \cite{coniglio2021new,basnet2018heuristics,bettinelli2017branch}. Pferschy and Schauer \cite{pferschy2009knapsack} were the first to study approximation algorithms for MS. They design pseudopolynomial algorithms and {\em fully polynomial time approximation schemes (FPTAS)} for MS on 
graphs of bounded treewidth and chordal graphs. They also prove that MSB is strongly NP-hard, already for the special case of bipartite graphs with maximal degree $3$.

Kalra et al.~\cite{kalra2017maximum} consider the special case of MSB with uniform costs. They show that the problem is NP-hard and derive a $2$-approximation for this variant based on a greedy approach. Bandyapadhyay \cite{bandyapadhyay2014variant} also studied the MS problem. The paper includes a $d$-approximation for the special case of uniform weights MS on $(d+1)$-claw free graphs, and a {\em polynomial time approximation scheme (PTAS)} for MS on planar graphs. We note that the results of \cite{kulik2021lagrangian} can improve the approximation algorithm for MS on $d+1$-claw free graphs (see Section~\ref{sec:discussion}). For MS on {\em line graphs} (or, the {\em budgeted matching} problem) there is a PTAS~\cite{berger2011budgeted} and also an 
EPTAS~\cite{doron2023eptasMatching}. A similar variant of BMWIS considers the budgeted maximization of an independent set in a {\em matroid} \cite{chekuri2011multi,grandoni2010approximation,doron2023eptas,doron2023fptas}. 

The MWIS problem (i.e., MS without a budget constraint) is a well studied problem; we list below some notable works on this problem. Gr{\"o}tschel et al. \cite{grotschel2012geometric} show that MWIS on perfect graphs admits a polynomial-time algorithm. 
Considerable attention was given also to MWIS on planar graphs.
For this family, Baker \cite{baker1994approximation} presented a PTAS. For the family of $d$-claw free graphs, a $\left(  \frac{d}{2}+\eps \right)$-approximation  was proposed presented by Neuwohner~\cite{neuwohner2021improved}. Finally, for MWIS on graphs with maximal degree $\Delta$, Halld{\'o}rsson presented an elegant algorithm that achieves an approximation ratio of $\frac{\log \Delta}{\log \log \Delta}$~ \cite{halldorsson1998approximations}. 

We note that CMWIS is a generalization of MWIS; thus, on general graphs it
cannot be approximated within a ratio better than $|V|^{1-\eps}$, unless NP=ZPP~\cite{hastad1996clique}. Furthermore, CMWIS remains (weakly) NP-hard even on 
degenerate graph which has no edges, 
as it generalizes the {\em subset sum} problem. The work of~\cite{doron2023approximating} presents a PTAS for CMWIS on perfect graphs; we are not aware of other results for this problem.

\noindent {\bf Organization:} In Section~\ref{sec:preliminaries} we give some notation and preliminary results. In Sections~\ref{sec:2} and~\ref{sec:3} we give the lower and upper bounds for MSB and MSP, respectively. In Section~\ref{sec:cap} we derive the lower bound for CMWIS. We conclude in Section~\ref{sec:discussion} with a discussion and open problems. 

\comment{
\section{Introduction}

\begin{tikzpicture}[scale=1.5]
	\node[circle, draw=black, fill=black, inner sep=0pt, minimum size=4pt, label={[label distance=0.2cm]270:{$1,\varepsilon$}}] (center) at (0,0) {};
	
	\foreach \i in {1,2,...,6}
	\node[circle, draw=black, fill=white, inner sep=1pt, minimum size=4pt, label={[label distance=0.2cm]60*\i:{$\frac{1}{2},1$}}] (node\i) at (60*\i:1.5cm) {};
	
	\foreach \i in {1,2,...,6}
	\draw (center) -- (node\i);
\end{tikzpicture}

\comment{
\begin{table}[htbp]
	\begin{tabular}{|l|l|l|l|l|} \hline
		& \multicolumn{2}{c|}{\textbf{SC}}
		& \multicolumn{2}{c|}{\textbf{SMC}} \\\hline
		& {\em u.b.} & {\em l.b.} &
		$s$ & $s$
		\\\hline
		General graphs  & \cdot & $n^{1-\epsilon}$  & $n/\log^2 n$  & $n/\log
		n$ \\\hline
		Perfect graphs  & 4  & \cdot & 16  & $O(\log n)$  \\\hline
		Comparability & {\bf 1.796}  & \cdot &
		{\bf 7.184} & \cdot  \\\hline
		Interval graphs & {\bf 1.796} (2 \cite{NSS-99}) & $c>1$ \cite{G-01}
		& \cdot & \cdot \\\hline
		Bipartite graphs & $27/26$ \cite{GJKM-02} & $c>1$ \cite{BK-98} & 1.5  & 2.8  \\\hline
		Line graphs     & 2  & NPC  & 2  & {\bf 12} \\\hline
		Partial $k$-trees & 1 \cite{J-97} & & PTAS \cite{HK-99} & FPAS \cite{HK-99} \\\hline
		Planar graphs & \cdot & NPC \cite{HK-99} & PTAS \cite{HK-99} & PTAS \cite{HK-99} \\\hline
		Trees           & 1 \cite{K-89} &  & PTAS \cite{HKPSST-99} & 1 \cite{HKPSST-99} \\\hline
		$k+1$-claw free & $k+1$ &  & $k+1$  & $\mathbf{4k^2-2k}$
		\\\hline
	\end{tabular}
	\caption{Known results for sum (multi-)coloring problems}
	\label{tbl:results}
\end{table}
}

 \begin{table} [h!]
	\label{table:KP}
	\centering
	\begin{tabular}{ c c c}
		\hline
		&  Lower Bound & Upper Bound \\ [0.5ex] \hline
		Perfect & $\mathbf{.}$ &  $\mathbf{2}$ \\  [1ex]  
		Bipartite & $\mathbf{2}\pmb{-\eps}$ (Strongly NPC \cite{pferschy2009knapsack}) &  $.$ \\  [1ex]  
		Bounded treewidth & NPC &  (FPTAS \cite{pferschy2009knapsack}) \\  [1ex]  
			Chordal & NPC &  (FPTAS \cite{pferschy2009knapsack}) \\  [1ex]  
		\hline
	\end{tabular}
	\caption{Known results for Knapsack with subclasses of perfect graphs}
	\hfill \break
	\label{table:MSP}
	\centering
\end{table}

 \begin{table} [h!]
	\label{table:1}
	\centering
	\begin{tabular}{ c c c}
		\hline
		 &  Lower Bound & Upper Bound \\  \hline 
		Perfect  &  ($\frac{3}{2}~ \cite{garey1979computers}$) &  $\mathbf{2.445}$ ($2.5$ \cite{epstein2008bin})  \\ 	 [1ex]  
	Bipartite  &  $\mathbf{.}$ &  $\mathbf{\frac{5}{3}}$ ($\frac{7}{4}$ \cite{epstein2008bin})   \\  [1ex]   
		Bipartite (asymptotic)  &  $\mathbf{c>1}$ &  $\mathbf{1.391}$ ($\frac{5}{3}$ \cite{epstein2008bin})   \\  [1ex]   
			Interval  &  $\mathbf{.}$ &  ($\frac{7}{3}$ \cite{epstein2008bin})   \\  [1ex]   
				Cluster  (asymptotic) &  NPC &  (AFPTAS \cite{doron2022bin})   \\  [1ex]   
		\hline
	\end{tabular}
	\caption{Known results for Bin Packing with subclasses of perfect graphs}
	\hfill \break
	\label{table:2}
	\centering
\end{table}

\comment{
 \begin{table} [h!]
 \label{table:1}
	\centering
	\begin{tabular}{ || c | c | c || }
		\hline
		Problem &  Lower Bound & Upper Bound \\ [0.5ex]
		\hline
		Bipartite Knapsack & Strong NP-Hard & $2+\eps$ ($2$ for uniform costs)  \\  \hline
		Bipartite Bin Packing  &  &  absolute: $\frac{7}{4}$, asymptotic: $1.5$   \\  \hline
		Bin Packing with Conflicts (Perfect) &  $\frac{3}{2}$& $\frac{5}{2}$  \\  \hline
		Bipartite Santa Claus  &  &  \\  \hline
		Bipartite Multiple Knapsack & Strong NP-Hard &   \\ [1ex]
		\hline
	\end{tabular}
	\caption{Previously known bounds}
	\hfill \break
 	\label{table:2}
	\centering
	\begin{tabular}{ || c | c | c || }
		\hline
		Problem &  Lower Bound & Upper Bound \\ [0.5ex]
		\hline
		Bipartite Knapsack & $2-\eps$ &  2\\  \hline
		Bipartite Bin Packing  & asymptotic: no APTAS &  absolute: $\frac{5}{3}$, asymptotic: $1.391$   \\  \hline
		Bin Packing with Conflicts (Perfect) &  & $2+\frac{4}{9}$  \\  \hline
		Bipartite Santa Claus  & inapproximable &  \\  \hline
		Bipartite Multiple Knapsack & $2-\eps$ & $2+\eps$  \\ [1ex]
		\hline
	\end{tabular}
\caption{Our new bounds}
\end{table}

}

In this work we answer (negatively) an open question posed by \cite{kalra2017maximum} and close completely the approximation gap of MSB. TBA...
}
\section{Preliminaries}
\label{sec:preliminaries}

\subsection{Independent Sets in Perfect Graphs}

Given a graph $G = (V,E)$, an {\em independent set} in $G$ is a subset of vertices $S \subseteq V$ such that for all $u,v \in S$ it holds that $(u,v) \notin E$. Let $\textsf{IS}(G)$ be the collection of all independent sets in a graph $G$. Given a weight function $w:V \rightarrow \mathbb{R}_{\geq 0}$, a {\em maximum weight independent set} is an independent set $S \in \textnormal{\textsf{IS}}(G)$ for which $w(S)$ is maximized. 
A graph $G$ is called {\em perfect} if for every induced subgraph $G'$ of $G$ the cardinality of the maximum clique of $G'$ is equal to the minimum number of 
colors required to properly color $G'$ (i.e., the chromatic number $\chi(G')$). 
Note that $G'$ is also a perfect graph. 
The following well known result is due to \cite{grotschel2012geometric}. 	
\begin{lemma}  
\label{lem:grot}
Given a perfect graph $G = (V,E)$, a 
maximum weight independent set of $G$ can be computed in polynomial time. 
\end{lemma}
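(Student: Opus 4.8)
The plan is to follow the semidefinite-programming route of Grötschel, Lovász, and Schrijver, reducing the computation of a maximum weight independent set to optimizing a linear objective over a tractable convex body. Write $\mathrm{opt}_w(G)$ for the maximum weight of an independent set of $G$ under the weights $w$, and let $\mathrm{STAB}(G)\subseteq\mathbb{R}^V$ denote the convex hull of the characteristic vectors of the independent sets of $G$. A maximum weight independent set corresponds to maximizing $\sum_{v\in V} w(v)\,x_v$ over $\mathrm{STAB}(G)$, and since the vertices of this polytope are exactly the $0/1$ incidence vectors, the optimum is attained at one of them. The difficulty is that $\mathrm{STAB}(G)$ has exponentially many facets and no direct handle. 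The key structural fact I would invoke is the perfect-graph characterization of the stable set polytope (Chvátal): $G$ is perfect if and only if $\mathrm{STAB}(G)$ coincides with the clique-constrained relaxation $\mathrm{QSTAB}(G)=\{x\ge 0 : \sum_{v\in K} x_v \le 1 \text{ for every clique } K \text{ of } G\}$.

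To optimize over this body in polynomial time — note that separating the clique inequalities directly amounts to a maximum weight clique computation — I would interpose the \emph{theta body} $\mathrm{TH}(G)$ associated with the Lovász theta function. The relevant sandwich is $\mathrm{STAB}(G)\subseteq \mathrm{TH}(G)\subseteq \mathrm{QSTAB}(G)$, where $\mathrm{TH}(G)$ is defined through orthonormal representations and admits a polynomial-time weak optimization oracle: maximizing a linear function over $\mathrm{TH}(G)$ is a semidefinite program, which the ellipsoid method solves to within any prescribed additive error in time polynomial in $|V|$ and the encoding length. Because $G$ is perfect, the two containments collapse to equalities, so $\mathrm{TH}(G)=\mathrm{STAB}(G)$ and the semidefinite optimum equals $\mathrm{opt}_w(G)$.

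It remains to recover an actual independent set, not merely its value. Here I would use self-reducibility. Assuming (without loss of generality, after clearing denominators) that the weights are nonnegative integers of polynomially bounded encoding length, the exact value $\mathrm{opt}_w(G)$ is an integer and can be read off from the weak optimum once the additive error is driven below $\tfrac12$. Then I process the vertices one at a time: for a vertex $v$ with closed neighborhood $N[v]$, the induced subgraph $G-N[v]$ is again perfect (as noted in the excerpt, induced subgraphs of perfect graphs are perfect), so its optimal value is computable by the same routine; if $w(v)+\mathrm{opt}_w(G-N[v])=\mathrm{opt}_w(G)$ then some maximum weight independent set contains $v$, so I add $v$ to the solution and recurse on $G-N[v]$, and otherwise I delete $v$ and recurse on $G-v$. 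After at most $|V|$ rounds this yields a maximum weight independent set explicitly.

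The main obstacle is the passage from the \emph{weak} guarantees of semidefinite optimization to an \emph{exact} combinatorial answer. The body $\mathrm{TH}(G)$ is in general not polyhedral and its description involves irrational data, so the ellipsoid method only returns a point and a value that are approximately optimal; the crux is to bound the bit complexity and to show that, for integral weights, the additive error can be made small enough (polynomially many bits of precision suffice) to determine $\mathrm{opt}_w(G)$ exactly and to render each self-reduction test unambiguous. Granting the standard rationality and encoding-length assumptions, these precision bounds go through and establish the lemma.
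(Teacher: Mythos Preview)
The paper does not actually prove this lemma: it simply states it as a well-known result and attributes it to Gr{\"o}tschel, Lov{\'a}sz, and Schrijver (the reference \texttt{grotschel2012geometric}). Your proposal is a faithful sketch of precisely that argument---the $\mathrm{STAB}(G)\subseteq\mathrm{TH}(G)\subseteq\mathrm{QSTAB}(G)$ sandwich collapsing for perfect graphs, semidefinite optimization over the theta body via the ellipsoid method, and self-reducibility to extract an explicit independent set---so there is nothing to contrast; you have supplied the proof the paper chose to cite rather than reproduce.
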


\subsection{Budgeted Independent Set}
	We use a tuple $\ci = (V,E,w,c,B)$ to denote an MSP instance, where $G = (V,E)$ is a perfect graph, $w:V \rightarrow \mathbb{R}_{\geq 0}$ is a weight function, $c:V \rightarrow \mathbb{R}_{\geq 0}$ is a cost function, and $B \in \mathbb{R}_{\geq 0}$ is a budget. 
	For convenience, we denote an instance of MSB by $\ci = (L,R,E,w,c,B)$, where $G = (L \cup R,E)$ is the given bipartite graph. The special case of MSB where $c$ is {\em uniform} is called below U-MSB. 
	Given a U-MSB instance, we assume w.l.o.g. that $\forall v \in L \cup R: c(v) = 1$. 

\section{The Lower Bound for Uniform MSB}
\label{sec:2}

	In this section we give the proof of the lower bound presented in Theorem~\ref{thm:KP}. We reduce the problem of identifying large {\em balanced bicliques} in a bipartite graph to MSB. As a starting point, we use the following strong hardness result due to \cite{manurangsi2017inapproximability}.\footnote{We refer the reader to \cite{manurangsi2017inapproximability} for further details on the {\em small set expansion hypothesis(SSEH)}.}  For $t,s \in \mathbb{N}$, let $K_{t,s}$ be a complete bipartite graph in which one side contains exactly $t$ vertices, and the other side contains exactly $s$ vertices. We say that $K_{t,s}$ is a {\em biclique} with parameters $t$ and $s$. If $t = s$ then $K_{t,s}$ is {\em balanced} and is {\em unbalanced} otherwise.

	\comment{
	Despite the centrality of MSB, prior to this work it was only known that the problem is NP-hard in the strong sense~\cite{pferschy2009knapsack}, already in the special case of unit costs, i.e., U-MSB. For this special case, the best known approximation ratio is $2$, that is easily achieved by solving optimally {\em knapsack with cardinality constraint} on each side of the bipartite graph and returning the better solution~\cite{kalra2017maximum}. The question whether U-MSB admits a $(2 - \eps)$-approximation remained open.
	
	In this work we answer this question negatively, by giving a $(2-\eps)$ lower bound for U-MSB based on the {\em Small Set Expansion Hypothesis (SSEH)}.
	This closes the approximation gap for U-MSB. Furthermore, we show (in  Section~\ref{sec:MSP_upper_bound}) a matching upper bound of $2$ for MSP. 
}

\begin{lemma} \cite{manurangsi2017inapproximability} 
	\label{lem:hardness1}
	Assuming SSEH, for any $\delta > 0$ and $n \in \mathbb{N}$, given a bipartite graph $G = (L,R,E)$ with $|L| = |R| = n$, it is NP-hard to distinguish between the following two cases: \begin{itemize}
		\item 	(Completeness, "yes" case) $G$ contains $K_{\left(\frac{1}{2}-\delta\right) \cdot n, \left(\frac{1}{2}-\delta\right) \cdot n}$ as a subgraph. 
		
		\item (Soundness, "no" case) $G$ does not contain $K_{ \delta \cdot n,  \delta \cdot n}$ as a subgraph. 
		
	\end{itemize}

\end{lemma}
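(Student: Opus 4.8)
The plan is to derive this gap by a reduction from the \emph{Small Set Expansion (SSE)} problem, which is the canonical source of hardness under SSEH. First I would fix a working form of the hypothesis: for a $d$-regular graph $H=(V_H,E_H)$ and a vertex set $S$, write $\Phi(S)$ for its edge expansion, and recall that SSEH asserts that for every $\eta>0$ there is a measure parameter $\beta>0$ for which it is NP-hard to distinguish (i) there exists a set $S$ of measure $\beta$ with $\Phi(S)\le\eta$, from (ii) every set of measure $\beta$ has $\Phi(S)\ge 1-\eta$. I would additionally invoke the standard strengthening in which, in the soundness case, \emph{all} sets up to measure $\beta$ expand, since that is what the decoding step below will consume.

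Next I would construct, from an SSE instance $H$, a bipartite graph $G=(L,R,E)$ in which large balanced bicliques correspond to low-expansion sets of $H$. The naive attempts — letting $L$ and $R$ be copies of $V_H$ and joining $u\in L$ to $v\in R$ according to adjacency (or non-adjacency) in $H$ — only encode clique/independent-set structure and are blind to the \emph{fractional} nature of expansion, so they cannot produce the clean $(\tfrac12-\delta)$ versus $\delta$ gap. The heart of the reduction is therefore an amplification step that converts a gap in the \emph{fraction} of internal edges into a gap in the \emph{size} of the largest balanced biclique. Concretely I would pass to a suitable product/noise construction on $H$ (a tensored, noised version of the graph) whose vertices index the two sides of $G$, so that a set retaining most of its edges internally at the base level yields, at the product level, two sides a constant fraction of whose cross pairs are simultaneously adjacent.

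For completeness I would take the promised measure-$\beta$ set $S$ with $\Phi(S)\le\eta$ and show that it induces, through this construction, two sides $L'\subseteq L$ and $R'\subseteq R$, each of size $(\tfrac12-\delta)n$, all of whose cross pairs are edges; this is a direct density estimate from $\Phi(S)\le\eta$ and the parameters of the product/noise step. The harder direction, and the main obstacle, is soundness: I must show that the \emph{absence} of any small non-expanding set in $H$ forbids a balanced biclique of side-size $\delta n$ in $G$. I would argue by contradiction, taking a hypothetical $K_{\delta n,\delta n}$ and \emph{decoding} it back into a set of $H$ of measure about $\beta$ whose internal edge fraction is too high, i.e. with $\Phi<1-\eta$, contradicting the soundness case. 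This decoding — an averaging argument over the coordinates of the product construction, controlled by hypercontractivity and the small-set-expansion property of the noisy product graph — is where essentially all of the technical work lives.

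Finally I would tune the free parameters (the noise rate, the tensor power, and $\beta$ as a function of the target $\delta$) so that both sides have exactly $n$ vertices with the completeness side-size $(\tfrac12-\delta)n$ and the soundness threshold $\delta n$, padding and symmetrizing $L$ and $R$ if the construction is not already balanced. Quantitatively relating $\eta$ and $\beta$ to $\delta$ through the amplification is what produces the stated gap, and is the second point at which careful parameter bookkeeping is required.
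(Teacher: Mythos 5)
You should first note what the paper actually does with this statement: Lemma~\ref{lem:hardness1} is \emph{imported} from \cite{manurangsi2017inapproximability} and used as a black box; the paper contains no proof of it, and its own technical work begins only at Lemma~\ref{lem:Pas}, which builds on this lemma. So what you have written is an attempt to reprove Manurangsi's theorem from SSEH, and it must be judged as such. As a proof it is not complete: it is a roadmap in which every load-bearing step is named but deferred. The product/noise construction is never specified, the completeness ``density estimate'' is never carried out, the soundness decoding (the averaging argument ``controlled by hypercontractivity'') is explicitly flagged as ``where essentially all of the technical work lives'' and then not done, and the parameter bookkeeping relating $\eta,\beta$ to $\delta$ is likewise postponed. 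These deferred steps are precisely the content of the cited paper; a sketch that points at them does not discharge them.

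Beyond incompleteness, there is a concrete structural gap in the completeness direction as you have set it up. In SSEH the quantifiers are ``for every $\eta>0$ there \emph{exists} $\beta>0$'': the measure $\beta$ of the promised non-expanding set is small and given to you by the hypothesis; you cannot tune it up to a constant like $\tfrac12-\delta$. A single set $S$ of measure $\beta$ with $\Phi(S)\le\eta$ yields, through any construction in which biclique sides are decoded from $S$, structure of measure tied to $\beta$ (tensoring only shrinks it, e.g.\ $S^{\otimes k}$ has measure $\beta^k$), not measure $\tfrac12-\delta$. Bridging from ``one small non-expanding set'' to ``a biclique covering nearly half of each side'' requires a structured-completeness formulation of SSEH --- e.g.\ the Raghavendra--Steurer--Tulsiani equivalence in which, in the yes case, the whole vertex set is partitioned into non-expanding sets of measure $\beta$, so that unions of many parts (after a cleaning step that accounts for the $\delta$ loss) can be assembled into the two sides --- or some equally substantive amplification. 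Your proposal invokes only the soundness-side strengthening (``all small sets expand'') and treats completeness as ``a direct density estimate,'' which is exactly the step that cannot work from the single-set hypothesis. Until that is repaired, the reduction does not produce the $\bigl(\tfrac12-\delta\bigr)n$ versus $\delta n$ gap claimed in the lemma.
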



We give the next lemma as an intermediate step in our hardness result. Intuitively, it states that the problem of finding an {\em unbalanced} biclique with suitable parameters is as hard as the problem of finding a balanced biclique. For short, we use $[t] = \{1,2,\ldots,t\}$ for every $t \in \mathbb{N}$. 

\begin{lemma}
	\label{lem:Pas}
	Assuming SSEH, for any $0<\delta < 0.1^{10}$ such that $ \delta^{-\frac{1}{2}} \in \mathbb{N}$ and $n \in \mathbb{N}$, given a bipartite graph $G = (A,B,E)$ with $|A| = n$ and $|B| = n^{t}, t = \delta^{-\frac{1}{2}}$, it is NP-hard to distinguish between the following two cases: 
\begin{itemize}
		\item 	(Completeness, "yes" case) $G$ contains $K_{\left(\frac{1}{2}-\delta\right) \cdot n,\left(1-\delta^{\frac{1}{10}}\right) \cdot n^t}$ as a subgraph. 
		
		\item (Soundness, "no" case) $G$ does not contain $K_{ \delta \cdot n,  \delta^{\frac{1}{10}} \cdot n^t}$ as a subgraph. 
		
	\end{itemize}

\end{lemma}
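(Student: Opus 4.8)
The plan is to reduce the balanced-biclique problem of Lemma~\ref{lem:hardness1} to the unbalanced one by blowing up only the $R$-side: keep $A = L$ (so $|A| = n$) and let $B = R^{t}$ be the set of all $t$-tuples over $R$ (so $|B| = n^{t}$), with $t = \delta^{-1/2}$. The natural adjacency that makes the ``yes'' direction work is the \emph{disjunctive} rule: join $a \in A$ to a tuple $\vec r = (r_1,\dots,r_t) \in B$ in $G'$ iff $a$ is $G$-adjacent to at least one coordinate $r_i$. The point of this rule is that a single good coordinate simultaneously witnesses adjacency to \emph{every} vertex of a biclique side, which is exactly what lets half of $A$ dominate almost all of the huge set $B$.

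Completeness should then be a counting fact. Given $L_0, R_0$ with $|L_0| = |R_0| = (\tfrac12 - \delta)n$ spanning a $K_{(1/2-\delta)n,(1/2-\delta)n}$, I would take $A_0 = L_0$ and let $B_0$ be all tuples whose value-set meets $R_0$. Every such tuple is adjacent in $G'$ to all of $L_0$ (the coordinate lying in $R_0 \subseteq N_G(a)$ is a common witness for all $a \in L_0$), so $A_0 \times B_0$ is a biclique whose $B$-side has size $n^t - (n - |R_0|)^t = (1 - (\tfrac12+\delta)^t)\,n^t$. It then remains to check $(\tfrac12+\delta)^t \le \delta^{1/10}$; taking logarithms and using $t = \delta^{-1/2}$ this reduces to an inequality of the form $c\,\delta^{-1/2} \ge \ln(1/\delta)$, which holds for all $\delta < 0.1^{10}$ because $\delta^{-1/2}$ dominates $\ln(1/\delta)$. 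This produces the required $K_{(1/2-\delta)n,(1-\delta^{1/10})n^t}$, and the exponent $\tfrac1{10}$ is tuned precisely to keep this inequality comfortable.

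The hard part is soundness, and I expect it to be the main obstacle. I would argue the contrapositive: a biclique $A_0 \times B_0$ in $G'$ with $|A_0| = \delta n$ and $|B_0| = \delta^{1/10}n^t$ must certify a balanced $K_{\delta n,\delta n}$ in $G$, which Lemma~\ref{lem:hardness1} forbids. The danger is \emph{split coverage}: under the plain disjunctive rule a set $A_0$ of $\delta n$ vertices can be adjacent to almost all tuples without any large balanced biclique in $G$ --- e.g. if $A_0$ splits into two halves whose neighborhoods are two complementary halves of $R$, then a tuple covers $A_0$ as soon as it meets both halves, which almost all tuples do, yet the only bicliques present are of the form $K_{\delta n/2,\,n/2}$. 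Worse, mere edge density between the covered tuples' coordinates and $A_0$ guarantees only bicliques of size $O(\log n)$ (K\H{o}v\'ari--S\'os--Tur\'an is tight for random graphs), far short of $\delta n$. So the disjunctive product alone is insufficient, and the technical heart of the proof is to rule out split coverage: one must show that covering a $\delta^{1/10}$-fraction of $B$ using only $\delta n$ vertices, \emph{with tuples of the short length} $t = \delta^{-1/2}$, forces $\delta n$ vertices of $R$ to lie in the common neighborhood of $A_0$, hence a genuine $K_{\delta n,\delta n}$. Achieving this presumably requires tightening the adjacency toward a conjunctive/common-witness condition while preserving the completeness count above; balancing these two competing requirements is where I expect the real work to lie.
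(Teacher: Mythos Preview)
Your construction and completeness argument coincide with the paper's: $A=L$, $B=R^{t}$ with the disjunctive adjacency rule, and the count $n^{t}-(\tfrac12+\delta)^{t}n^{t}\ge(1-\delta^{1/10})n^{t}$ is exactly the paper's Claim~\ref{claim:delta1}.

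For soundness the paper does \emph{not} modify the adjacency. Its argument is short: if $|T\cap A|\ge\delta n$, the absence of $K_{\delta n,\delta n}$ in $G$ forces the common $G$-neighbourhood $N\subseteq R$ of $T\cap A$ to have $|N|<\delta n$, and the paper then asserts that the $(n-\delta n)^{t}$ tuples with every coordinate in $R\setminus N$ all fail to be $\bar G$-adjacent to all of $T\cap A$, yielding $|T\cap B|\le n^{t}-(n-\delta n)^{t}<\delta^{1/10}n^{t}$ (Claim~\ref{claim:delta2}). Your ``split coverage'' scenario is precisely a counterexample to this step: a tuple whose coordinates all lie in $R\setminus N$ has, for each coordinate, \emph{some} non-neighbour in $T\cap A$, but not necessarily a single common one, so $(R\setminus N)^{t}\not\subseteq\{\vec r:\exists\,a\in T\cap A,\ \forall i,\ (a,r_i)\notin E\}$. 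In your two-halves example ($T\cap A$ split with $G$-neighbourhoods $R_1,R_2$ partitioning $R$) one has $N=\emptyset$ yet $|T\cap B|=(1-2^{1-t})n^{t}\gg\delta^{1/10}n^{t}$, and this $G$ is a valid ``no'' instance of Lemma~\ref{lem:hardness1}. So you have not overlooked an idea that the paper supplies; you have correctly located a gap that the paper's written soundness argument does not close.
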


We show that if we can distinguish between the two cases of Lemma~\ref{lem:Pas}, then we can also distinguish between the two cases of Lemma~\ref{lem:hardness1}. Let $0<\delta <0.1^{10}$ such that $ \delta^{-\frac{1}{2}} \in \mathbb{N}$, $n \in \mathbb{N}$,  and $G = (L,R,E)$ be a bipartite graph with $|L| = |R| = n$. 
To reach the conditions of Lemma~\ref{lem:Pas}, we modify the graph $G$ asymmetrically, where $L$ does not change and $R$ is replaced by a set of vectors $B$, where each entry of the vectors corresponds to a vertex of $R$. The edges in the modified (bipartite) graph connect vertices in $L$ to vectors in $B$.

 Specifically, let $t = \delta^{-\frac{1}{2}}$. Define $B = \{(r_1, \ldots, r_t) ~|~ \forall i \in [t]:r_i \in R\}$ as the set of vectors of length $t$, with entries that are vertices from $R$; also, let $A = L$ and define 
 \begin{equation}
	\label{eq:OR}
	\bar{E} = \{ \left(\ell, \left(r_1, \ldots, r_t\right)\right) \in A \times B~|~ \exists i \in [t] \text{ s.t. } (\ell,r_i) \in E \}. 
\end{equation}

There is an edge in $\bar{E}$ between a vector in $B$ and a vertex $\ell \in A$ if at least one of the entries in this vector is adjacent to $\ell$ in the original graph.  Finally, define the {\em reduced graph} of $G$ as $\bar{G} = (A,B,\bar{E})$. We give an example of the construction of $\bar{G}$ in Figure~\ref{fig:G}.\footnote{To simplify the example, we take $\delta>0.1$.} The proof of Lemma~\ref{lem:Pas} follows by showing that $\bar{G}$ satisfies the "yes" case of Lemma~\ref{lem:Pas} iff $G$ satisfies the "yes" case of Lemma~\ref{lem:hardness1}. 

\comment{
\begin{tikzpicture}[scale=2]
	\draw (0,0) node[circle,draw,inner sep=0.75pt] {$\ell_1$} -- (1,0) node[circle,draw,inner sep=0.75pt] {$r_1$};
	\draw (0,0) -- (1,-0.5) node[circle,draw,inner sep=0.75pt] {$r_2$};
	\draw (1,1) -- (0,-0.5) node[circle,draw,inner sep=0.75pt] {$\ell_2$};
	\draw (1,0) -- (0,-0.5);
	
	\begin{scope}[xshift=3cm]
		\draw (0,0) node[circle,draw,inner sep=0.75pt] {$\ell_1$} -- (1,1) node[circle,draw,inner sep=0.75pt] {$(r_1,r_1)$};
		\draw (0,0) -- (1,0.5) node[circle,draw,inner sep=0.75pt] {$(r_1,r_2)$};
		\draw (0,0) -- (1,-0.5) node[circle,draw,inner sep=0.75pt] {$(r_2,r_1)$};
		\draw (0,0) -- (1,-1) node[circle,draw,inner sep=0.75pt] {$(r_2,r_1,r_2)$};
		\draw (1,1) -- (0,-0.5) node[circle,draw,inner sep=0.75pt] {$\ell_2$};
		\draw (1,0.5) -- (0,-0.5);
		\draw (1,-0.5) -- (0,-0.5);
		\draw (1,-1) -- (0,-0.5);
	\end{scope}
\end{tikzpicture}

\begin{tikzpicture}[scale=2]
	\draw (0,0) node[circle,draw,inner sep=1.5pt] {$\ell_1$} -- (1,1) node[circle,draw,inner sep=1.5pt] {$r_1$};
	\draw (0,0) -- (1,0) node[circle,draw,inner sep=1.5pt] {$r_2$};
	\draw (1,1) -- (0,-0.5) node[circle,draw,inner sep=1.5pt] {$\ell_2$};
	\draw (1,0) -- (0,-0.5);
	
	\begin{scope}[xshift=3cm]
		\draw (0,0) node[circle,draw,inner sep=1.5pt] {$\ell_1$} -- (1,1) node[circle,draw,inner sep=1.5pt] {$(r_1,r_1)$};
		\draw (0,0) -- (1,0.5) node[circle,draw,inner sep=1.5pt] {$(r_1,r_2)$};
		\draw (0,0) -- (1,-0.5) node[circle,draw,inner sep=1.5pt] {$(r_2,r_1)$};
		\draw (0,0) -- (1,-1) node[circle,draw,inner sep=1.5pt] {$(r_2,r_1,r_2)$};
		\draw (1,1) -- (0,-0.5) node[circle,draw,inner sep=1.5pt] {$\ell_2$};
		\draw (1,0.5) -- (0,-0.5);
		\draw (1,-0.5) -- (0,-0.5);
		\draw (1,-1) -- (0,-0.5);
	\end{scope}
\end{tikzpicture}

}

\begin{figure}
\begin{tikzpicture}[scale=1.4, every node/.style={draw, circle, inner sep=1pt}]
	\node (l1a) at (5.5,-0.5) {$\ell_1$};
	\node (l2a) at (7,-0.5) {$\ell_2$};
	\node (r1a) at (5.5,-1.5) {$\bf \textcolor{red}{r_1}$};
	\node (r2a) at (7,-1.5) {$\bf \textcolor{blue}{r_2}$};
	\draw (l1a) -- (r1a);
	\draw (l2a) -- (r1a);
	\draw (l2a) -- (r2a);
	
	\node (l1b) at (5.5,2.5) {$\ell_1$};
	\node (l2b) at (7,2.5) {$\ell_2$};
	\node (r1b) at (1,1) { $\bf (\textcolor{red}{ r_1,r_1,r_1})$};
	\node (r2b) at (2.5,1) { $\bf (\textcolor{red}{ r_1,r_1},\textcolor{blue}{r_2})$};
	\node (r3b) at (4,1) { $\bf (\textcolor{red}{ r_1},\textcolor{blue}{r_2},\textcolor{red}{ r_1})$};
	\node (r4b) at (5.5,1) { $\bf (\textcolor{blue}{ r_2},\textcolor{red}{r_1,r_1})$};
	\node (r5b) at (7,1) { $\bf (\textcolor{red}{ r_1},\textcolor{blue}{r_2,r_2})$};
	\node (r6b) at (8.5,1) { $\bf (\textcolor{blue}{ r_2},\textcolor{red}{r_1},\textcolor{blue}{ r_2})$};
	\node (r7b) at (10,1) { $\bf (\textcolor{blue}{r_2,r_2},\textcolor{red}{r_1})$};
	\node (r8b) at (11.5,1) { $\bf (\textcolor{blue}{r_2,r_2,r_2})$};
	\draw (l1b) -- (r1b);
	\draw (l1b) -- (r2b);
	\draw (l1b) -- (r3b);
	\draw (l1b) -- (r4b);
	\draw (l1b) -- (r5b);
	\draw (l1b) -- (r6b);
	\draw (l1b) -- (r7b);
	\draw (l2b) -- (r8b);
	\draw (l2b) -- (r2b);
	\draw (l2b) -- (r3b);
	\draw (l2b) -- (r4b);
	\draw (l2b) -- (r5b);
	\draw (l2b) -- (r6b);
	\draw (l2b) -- (r7b);
\end{tikzpicture}
	\caption{\label{fig:G} The construction of the reduced graph $\bar{G}$ (on the top) of a graph $G$  (in the bottom)  with the vertices $L = \{\ell_1,\ell_2\}$ and $R = \{r_1,r_2\}$, with parameters $t = 3$ and $n = 2$. Observe that the biclique $K_{1,1}$ (e.g., with the vertices $\ell_1,r_1$) in $G$ induces the biclique $K_{1,7}$ in $\bar{G}$. }
\end{figure}
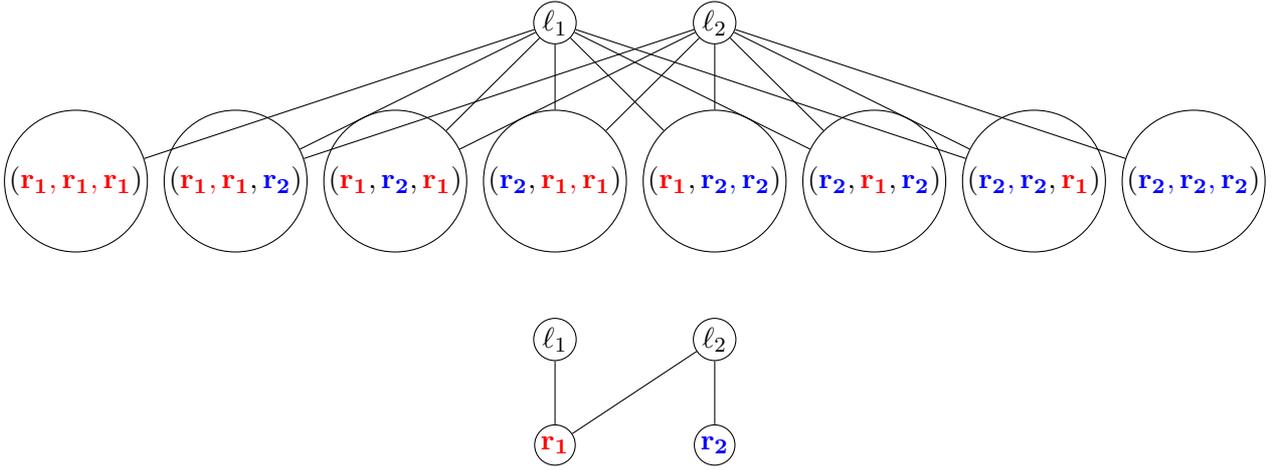

\comment{

	\begin{figure}
			\begin{tikzpicture}[scale=1.5, every node/.style={draw, circle, inner sep=1pt}]
				
				\node (l1b) at (4,2) {$\ell_1$};
				\node (l2b) at (5,2) {$\ell_2$};
				\node (r1b) at (1,1) {\smaller $\bf (\textcolor{red}{ r_1,r_1,r_1})$};
				\node (r2b) at (2,1) {\smaller \smaller \smaller $\bf (\textcolor{red}{ r_1,r_1},\textcolor{blue}{r_2})$};
				\node (r3b) at (3,1) {\smaller \smaller \smaller $\bf (\textcolor{red}{ r_1},\textcolor{blue}{r_2},\textcolor{red}{ r_1})$};
				\node (r4b) at (4,1) {\smaller \smaller \smaller $\bf (\textcolor{blue}{ r_2},\textcolor{red}{r_1,r_1})$};
				\node (r5b) at (5,1) {\smaller \smaller \smaller $\bf (\textcolor{red}{ r_1},\textcolor{blue}{r_2,r_2})$};
				\node (r6b) at (6,1) {\smaller \smaller \smaller $\bf (\textcolor{blue}{ r_2},\textcolor{red}{r_1},\textcolor{blue}{ r_2})$};
				\node (r7b) at (7,1) {\smaller \smaller \smaller $\bf (\textcolor{blue}{r_2,r_2},\textcolor{red}{r_1})$};
				\node (r8b) at (8,1) {\smaller \smaller \smaller $\bf (\textcolor{blue}{r_2,r_2,r_2})$};
				\draw (l1b) -- (r1b);
				\draw (l1b) -- (r2b);
				\draw (l1b) -- (r3b);
				\draw (l1b) -- (r4b);
				\draw (l1b) -- (r5b);
				\draw (l1b) -- (r6b);
				\draw (l1b) -- (r7b);
				\draw (l2b) -- (r8b);
				\draw (l2b) -- (r2b);
				\draw (l2b) -- (r3b);
				\draw (l2b) -- (r4b);
				\draw (l2b) -- (r5b);
				\draw (l2b) -- (r6b);
				\draw (l2b) -- (r7b);
			\end{tikzpicture}
			\caption{\label{fig:G} The construction of the reduced graph $\bar{G}$ (on the right) of a graph $G$ (on the left) with the vertices $L = \{\ell_1,\ell_2\}$ and $R = \{r_1,r_2\}$ with parameters $t = 3$ and $n = 2$. Observe that the biclique $K_{1,1}$ (e.g., with the vertices $\ell_1,r_1$) in $G$ induces the biclique $K_{1,7}$ in $\bar{G}$. }
		\end{figure}

\begin{tikzpicture}[every node/.style={circle, draw, fill=white, inner sep=0pt, minimum size=10pt}]
	\node (L1G1) at (-2,2) {$\ell_1$};
	\node (L2G1) at (-2,0) {$\ell_2$};
	\node (R1G1) at (0,3) {$r_1$};
	\node (R2G1) at (0,1) {$r_2$};
	
	\node[color=red] (L1G2) at (4,2.5) {$\ell_1$};
	\node[color=blue] (L2G2) at (4,0.5) {$\ell_2$};
	
	\node[color=red] (R1B1) at (6,4) {$r_1$};
	\node[color=red] (R1B2) at (6,3) {$r_1$};
	\node[color=red] (R1B3) at (6,2) {$r_1$};
	\node[color=red] (R1B4) at (6,1) {$r_1$};
	\node[color=blue] (R2B1) at (6,-1) {$r_2$};
	\node[color=blue] (R2B2) at (6,-2) {$r_2$};
	\node[color=blue] (R2B3) at (6,-3) {$r_2$};
	\node[color=blue] (R2B4) at (6,-4) {$r_2$};
	
	\draw (L1G1) -- (R1G1);
	\draw (L2G1) -- (R1G1);
	\draw (L2G1) -- (R2G1);
	
	\draw (L1G2) -- (R1B1);
	\draw (L1G2) -- (R1B2);
	\draw (L1G2) -- (R1B3);
	\draw (L1G2) -- (R1B4);
	\draw (L2G2) -- (R2B1);
	\draw (L2G2) -- (R2B2);
	\draw (L2G2) -- (R2B3);
	\draw (L2G2) -- (R2B4);
\end{tikzpicture}
}

\noindent
{\bf Proof of Lemma~\ref{lem:Pas}:} 
Let $G = (L,R,E)$ be a bipartite graph with $|L| = |R| = n$ as in Lemma~\ref{lem:hardness1}, and let $\bar{G}$ be the reduced graph of $G$ as defined above for a given $0<\delta < 0.1^{10}$, where $ \delta^{-\frac{1}{2}} \in \mathbb{N}$. We now show the equivalence of the completeness ("yes" case) and soundness ("no" case) in the two lemmas. That is, we show that the "yes" case holds for $G$ according to Lemma~\ref{lem:hardness1} iff the "yes" case of Lemma~\ref{lem:Pas} holds for $\bar{G}$. Hence, distinguishing between the cases in this lemma is at least as hard as distinguishing between the cases in Lemma~\ref{lem:hardness1}.  For the first direction, we use the next auxiliary claim.

 \begin{claim}
	\label{claim:delta1}
	For all $0<\delta<0.1$ it holds that  $ 	\left(1-\left(\frac{1}{2}+\delta\right)^{\delta^{-1/2}}\right) - 	\left(1-\delta^{1/10}\right) \geq 0$.
\end{claim}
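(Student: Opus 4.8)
The plan is to reduce the claim to a clean one-variable inequality and then dispatch it with elementary logarithmic estimates. First I would observe that the displayed quantity is nonnegative exactly when
$$\left(\tfrac12+\delta\right)^{\delta^{-1/2}} \le \delta^{1/10},$$
since the two leading $1$'s cancel and the inequality merely compares the two subtracted terms. As both sides are strictly positive, I would take natural logarithms, turning the goal into
$$\delta^{-1/2}\,\ln\!\left(\tfrac12+\delta\right) \le \tfrac{1}{10}\,\ln\delta.$$

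Next I would record the signs carefully: for $0<\delta<0.1$ we have $\tfrac12+\delta\in(0.5,0.6)$, so $\ln\!\left(\tfrac12+\delta\right)<0$, and of course $\ln\delta<0$; hence both sides are negative. Multiplying through by $-1$ and reversing the inequality, the target becomes
$$\delta^{-1/2}\cdot\bigl(-\ln(\tfrac12+\delta)\bigr)\ \ge\ \tfrac{1}{10}\,\ln\tfrac1\delta.$$
Here I would use monotonicity of $\tfrac12+\delta$ in $\delta$ to obtain the uniform lower bound $-\ln(\tfrac12+\delta)\ge-\ln(0.6)>\tfrac12$ valid on the whole range $\delta\in(0,0.1)$, which controls the left-hand factor from below.

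Finally I would substitute $u=\delta^{-1/2}$ (so that $\ln\tfrac1\delta=2\ln u$ and $u>\sqrt{10}$), reducing the goal to showing $\tfrac12 u\ge\tfrac15\ln u$, equivalently $\tfrac52 u\ge\ln u$. This follows immediately from the standard bound $\ln u\le u-1\le u\le\tfrac52 u$, which holds for every $u>0$. Chaining these inequalities back up the derivation recovers the claim.

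I do not expect a genuine obstacle here, since the entire content is an estimate rather than a structural argument. The only delicate point is bookkeeping of the sign flips: one must remember that both sides of the logarithmic form are negative and that multiplying by $-1$ reverses the inequality, and one must take the lower bound on $-\ln(\tfrac12+\delta)$ (equivalently the upper bound on $\tfrac12+\delta$) rather than the reverse. A careless step at exactly that junction is the one place the argument could fail, so I would state the monotonicity and the direction of each inequality explicitly.
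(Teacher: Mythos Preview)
Your proof is correct and takes a genuinely different, more elementary route than the paper. After the same logarithmic reformulation, the paper sets $f(x)=\tfrac{\ln x}{10}-\tfrac{\ln(\tfrac12+x)}{\sqrt{x}}$, verifies the numerical value $f(0.1)\approx 1.385>0$, and then proves $f'(x)\le 0$ on $(0,0.1)$ via a short derivative computation; monotonicity together with the endpoint value yields $f\ge 0$ throughout the interval. Your argument avoids calculus entirely: you control the left-hand factor by the uniform bound $-\ln(\tfrac12+\delta)\ge -\ln(0.6)>\tfrac12$ and then reduce to the single-variable inequality $\tfrac52 u\ge\ln u$, dispatched by $\ln u\le u-1$. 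Both approaches hinge on the same sign bookkeeping; yours is cleaner and self-contained, while the paper's derivative approach is more mechanical and generalizes to other exponents without new guesswork.
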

\begin{claimproof}
	Note that another way to state the inequality for $x = \delta$ is by 
	\begin{equation}
		\label{eq:in}
		\frac{\ln(x)}{10} \geq \frac{\ln (\frac{1}{2}+x)}{\sqrt{x}} .
	\end{equation} Therefore, the inequality holds if and only if the function $f(x) = 	\frac{\ln(x)}{10} - \frac{\ln (\frac{1}{2}+x)}{\sqrt{x}}$ is non-negative in the domain $(0,0.1)$. To see this, first note that $f(0.1) \geq 1.385 > 0$. Moreover, $f(x)$ is continuous in the domain $(0,0.1)$; thus, it suffices to show that the derivative of $f(x)$ satisfies $f'(x) \leq 0$ in the domain $(0,0.1)$. We note that \begin{equation}
	\label{eq:Xc}
	\begin{aligned}
		f'(x) ={} & \frac{1}{10 \cdot x}-\frac{1}{x^{1.5}+\frac{1}{2} \cdot x^{0.5}}+\frac{\ln(x+0.5)}{2\cdot x^{1.5}} \\
		\leq{} & \frac{1}{10 \cdot x}+\frac{\ln(x+0.5)}{2 \cdot x^{1.5}} \\
			\leq{} & \frac{1}{10 \cdot x}-\frac{1}{4 \cdot x^{1.5}} \\
				\leq{} & \frac{1}{10 \cdot x}-\frac{1}{10 \cdot x^{1.5}} \\
				\leq{} & 0
	\end{aligned} 
\end{equation} The second inequality holds since $\ln (0.5+x) < -0.5$ for all $x \in (0,0.1)$. The last inequality holds since $x^{1.5} \leq x$ for all $x \in (0,0.1)$. Since $f(0.1)> 0$ and by \eqref{eq:Xc} the claim follows.


\end{claimproof}


We now prove the first direction, which is the "yes" case of the lemma given that $G$ satisfies the "yes" case of Lemma~\ref{lem:hardness1}. Assume that $G$ contains $K_{\left(\frac{1}{2}-\delta\right) \cdot n, \left(\frac{1}{2}-\delta\right) \cdot n}$ as a subgraph, and let $K$ be the set of vertices of such a subgraph. Let $F_A = K \cap A$ and \begin{equation}
		\label{eq:OR2}
		F_B = \{\left(r_1, \ldots, r_t\right) \in B~|~\exists i \in [t] \text{ s.t. } r_i \in K\}.
	\end{equation}  Let $a \in F_A$ and $b = (r_1, \ldots,r_t) \in F_B$; Since $K$ is a biclique, it follows by \eqref{eq:OR2} that there is $i \in [t]$ such that $(a,r_i) \in E$. Thus, by \eqref{eq:OR} it holds that $(a,b) \in \bar{E}$ and we conclude that $F_A \cup F_B$ forms a biclique in $\bar{G}$. Observe that $|F_A| \geq \left(\frac{1}{2}-\delta\right) \cdot n$, thus  
	\begin{equation}
		\label{eq:FB}
		\begin{aligned}
			|F_B| ={} & n^t-|\{(r_1, \ldots,r_t) \in B~|~\forall i \in [t]: r_i \notin K\}| \\ 
				={} & n^t-\left(n- |F_A|\right)^t  \\
			\geq{} & n^t-\left(n-\left(\frac{1}{2}-\delta\right) \cdot n\right)^t \\ 
			={} & 
			n^t-\left(\frac{1}{2}+\delta\right)^t\cdot n^t \\ ={} & 
			\left(1-\left(\frac{1}{2}+\delta\right)^{\delta^{-1/2}}\right) \cdot n^t \\ \geq{} & 
			\left(1-\delta^{0.1}\right) \cdot n^t
		\end{aligned}
	\end{equation} The first equality holds by \eqref{eq:OR2}. The first inequality holds since  $|F_A| \geq \left(\frac{1}{2}-\delta\right) \cdot n$. The last inequality follows by Claim~\ref{claim:delta1}. Hence, by \eqref{eq:FB}, it follows that $\bar{G}$ contains $K_{\left(\frac{1}{2}-\delta\right) \cdot n,\left(1-\delta^{\frac{1}{10}}\right) \cdot n^t}$ as a subgraph, which is $F_A \cup F_B$.

We now prove the second direction, which is the "no" case of the lemma given that $G$ satisfies the "no" case of Lemma~\ref{lem:hardness1}.	For this, we use the next auxiliary claim.
	\begin{claim}
	\label{claim:delta2}
	For all $0<\delta<0.1^{10}$ it holds that  $ 	\delta^{0.1}-\left(1-(1-\delta)^{\delta^{-1/2}}\right) > 0$.
\end{claim}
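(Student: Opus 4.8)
The plan is to bound the term $(1-\delta)^{\delta^{-1/2}}$ from below by an elementary expression, so that the stated inequality collapses to a trivial comparison of two powers of $\delta$. Concretely, I would invoke Bernoulli's inequality $(1+x)^r \ge 1 + rx$, valid for every real $r \ge 1$ and $x \ge -1$. Here the exponent $r = \delta^{-1/2}$ satisfies $r \ge 1$ because $\delta < 0.1^{10} < 1$, and we may take $x = -\delta \ge -1$. This yields
\begin{equation*}
  (1-\delta)^{\delta^{-1/2}} \ge 1 - \delta^{-1/2}\cdot \delta = 1 - \delta^{1/2},
\end{equation*}
and hence $1 - (1-\delta)^{\delta^{-1/2}} \le \delta^{1/2} = \delta^{0.5}$.

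The second and final step is to observe that for $\delta \in (0,1)$ the map $a \mapsto \delta^a = e^{a \ln \delta}$ is strictly decreasing, since $\ln \delta < 0$. As $0.1 < 0.5$, this gives $\delta^{0.1} > \delta^{0.5}$ strictly. Chaining the two estimates,
\begin{equation*}
  \delta^{0.1} > \delta^{0.5} \ge 1 - (1-\delta)^{\delta^{-1/2}},
\end{equation*}
so $\delta^{0.1} - \left(1 - (1-\delta)^{\delta^{-1/2}}\right) > 0$, which is exactly the claim.

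I do not foresee a genuine obstacle: the only points requiring care are checking the hypotheses of Bernoulli's inequality (which hold comfortably, since $\delta^{-1/2} > 10^{5}$ throughout the domain) and keeping the final comparison strict, which is guaranteed because $\delta < 1$ forces $\delta^{0.1} > \delta^{0.5}$. Intuitively the bound is extremely loose: since $(1-\delta)^{\delta^{-1/2}} \approx e^{-\delta^{1/2}}$, the gap $1 - (1-\delta)^{\delta^{-1/2}}$ behaves like $\delta^{1/2}$, which is dwarfed by $\delta^{0.1}$ for small $\delta$, so the margin on $(0,0.1^{10})$ is enormous. If one preferred to mirror the derivative-based style of Claim~\ref{claim:delta1}, an alternative would be to rewrite the target as $(1-\delta)^{\delta^{-1/2}} > 1 - \delta^{0.1}$, take logarithms, and verify monotonicity on the domain; but the Bernoulli route avoids all calculus and is the one I would use.
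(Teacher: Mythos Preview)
Your proof is correct and considerably more elementary than the paper's. The paper substitutes $\delta = y^{10}$, rewrites the target as $h(y) = \frac{\ln(1-y^{10})}{y^5} - \ln(1-y) > 0$, computes $h'(y)$ explicitly and bounds it term by term to show $h' > 0$ on $(0,0.1)$, and then invokes L'H{\^o}pital's rule to show $\lim_{y\to 0^+} h(y) = 0$, concluding that $h$ is positive throughout. Your route replaces all of this calculus with a single application of Bernoulli's inequality, reducing the claim to the comparison $\delta^{0.1} > \delta^{0.5}$, which is immediate. The paper's approach has the minor advantage of being stylistically parallel to the proof of Claim~\ref{claim:delta1}, but your argument is shorter, avoids the derivative computation and the limit evaluation, and works on all of $(0,1)$ rather than just $(0,0.1^{10})$, so it actually proves more.
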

\begin{claimproof}
	Let $\delta = y^{10}$. 
	We rewrite the inequality using the function $h(y) = \frac{\ln(1-y^{10})}{y^5} - \ln(1-y)$ and our goal becomes to show that $h(y) > 0$ for the appropriate values of $y$ in the domain $(0,0.1)$. 
	we now show that $h'(y)>0$ in the domain. \begin{equation}
		\begin{aligned}
			\label{eq:h2}
		h'(y) ={} & \frac{ \frac{-10 \cdot y^{9}}{(1-y^{10})} \cdot y^5-5 \cdot y^4 \cdot \ln (1-y^{10})}{y^{10}}+\frac{1}{1-y} \\
		\geq{} &  \frac{ \frac{-10 \cdot y^{9}}{(1-y^{10})} \cdot y^5}{y^{10}}+\frac{1}{1-y} \\
			={} &    \frac{-10}{(1-y^{10})} \cdot y^{4}+\frac{1}{1-y} \\
				\geq{} &    \frac{-10}{(1-0.1)} \cdot {0.1^{4}}+\frac{1}{1-y} \\
				>{} &  -\frac{1}{100}+\frac{1}{1-y} \\
					>{} &  -\frac{1}{100}+\frac{1}{1-0} \\
						>{} &  0 \\
		\end{aligned}
	\end{equation}
The inequalities hold since $y \in (0,0.1)$. Then, by \eqref{eq:h2} it holds that $h'(y) >0$ for $0<\delta = y^{10} <0.1^{10}$. Thus, $h(y)$ is an  increasing continuous function in the domain $(0,0.1)$. Moreover, using L'H{\^o}pital's
rule
	 it holds that $\lim_{x \rightarrow 0^+} h(x) = 0$. Thus, $h(y)>0$ for all $0<\delta = y^{10}<0.1^{10}$. 
\end{claimproof}

	Assume that $G$ does not contain $K_{ \delta \cdot n,  \delta \cdot n}$ as a subgraph, and let $T$ be a biclique subgraph of $\bar{G}$ such that $|T \cap A| \geq  \delta \cdot n$. To complete the proof, we now show that $|T \cap B| <  \delta^{\frac{1}{10}} \cdot n^t$, which guarantees that there is no $K_{ \delta \cdot n,\delta^{\frac{1}{10}} \cdot n^t}$ subgraph in $\bar{G}$. 
	Observe that  \begin{equation}
		\label{eq:T}
		\begin{aligned}
			|T \cap B| ={} &
			|\{(r_1, \ldots,r_t) \in B~|~ \forall a \in T \cap A~  \exists i \in [t] \text{ s.t. } (a,r_i) \in E\}| \\ 
			={} &
			n^t-|\{(r_1, \ldots,r_t) \in B~|~ \exists a \in T \cap A \text{ s.t. } \forall i \in [t]: (a,r_i) \notin E \}| \\ 
			\leq{} &
			n^t-\left(n-\delta \cdot n\right)^t \\ 
			={} &
			n^t-(1-\delta)^t \cdot n^t \\ 
			={} &
			\left(1-(1-\delta)^{{\delta^{-1/2}}}\right) \cdot n^t \\ <{} &
			\delta^{0.1} \cdot n^t. 
		\end{aligned}
	\end{equation} The first equality holds by \eqref{eq:OR}. The first inequality holds since $G$ does not contain $K_{ \delta \cdot n,  \delta \cdot n}$ as a subgraph, which imply that at most $\delta \cdot n$ vertices in $R$ are connected to all vertices in $T \cap A$ in $G$; thus, there are at least $n-\delta \cdot n$ vertices $r \in R$ for which there is $a \in T \cap A = T \cap L$ such that $(a,r) \notin E$. Therefore, there are at least $(n-\delta \cdot n)^t$ vectors of length $t$ for which each entry belongs to $r \in R$ such that  $(a,r) \notin E$ for some $a \in T \cap A$. The last inequality follows by Claim~\ref{claim:delta2}. Hence, by \eqref{eq:T} it follows that $\bar{G}$ does not contain $K_{ \delta \cdot n,  \delta^{\frac{1}{10}} \cdot n^t}$ as a subgraph. Overall, we show that the "yes" case of Lemma~\ref{lem:hardness1} holds for $G$ if and only if the "yes" case of Lemma~\ref{lem:Pas} holds for $\bar{G}$. \qed
	


Using Lemma~\ref{lem:Pas}, we give a hardness result for U-MSB by transforming the given graph in the lemma into a U-MSB instance. Informally, given a graph $G = (A,B,E)$ as described in Lemma~\ref{lem:Pas},  we generate many {\em copies} of each vertex from $A$. Each copy of a vertex $a \in A$ is connected to all vertices $b \in B$ not connected to $a$ in $G$. The vertices of $B$ will be assigned a larger weight, making them more attractive for an MSB algorithm. 

\begin{figure}
		\hspace{1cm}{
		\begin{tikzpicture}[scale=1.5, every node/.style={draw, circle, inner sep=1pt}]

																						    \node[draw=none] at (1,1.5) {$\bf G$};
		
  \node[draw=none] at (9,2) {$A$};
  
    \node[draw=none] at (9,1) {$B$};
			\node (l1b) at (5,2) {$\bf \textcolor{red}{u}$};
			\node (l2b) at (6,2) {$\bf \textcolor{blue}{v}$};
			\node (r1b) at (3,1) {$~~$};
			\node (r2b) at (4,1) {$~~$};
			\node (r3b) at (5,1) {$~~$};
			\node (r4b) at (6,1) {$~~$};
			\node (r5b) at (7,1) {$~~$};
			\node (r6b) at (8,1) {$~~$};
			\draw (l1b) -- (r1b);
			\draw (l1b) -- (r2b);
			\draw (l1b) -- (r3b);
			\draw (l1b) -- (r4b);
			\draw (l1b) -- (r5b);
			\draw (l2b) -- (r2b);
			\draw (l2b) -- (r3b);
			\draw (l2b) -- (r4b);
			\draw (l2b) -- (r5b);
			\draw (l2b) -- (r6b);

		    \node[draw=none] at (11,0) {$X$};
		
			\node (u1) at (1,0) {$\bf \textcolor{red}{u_1}$};
						\node (u2) at (2,0) {$\bf \textcolor{red}{u_2}$};
									\node (u3) at (3,0) {$\bf \textcolor{red}{u_3}$};
												\node (u4) at (4,0) {$\bf \textcolor{red}{u_4}$};
															\node (u5) at (5,0) {$\bf \textcolor{red}{u_5}$};

																\node (v1) at (6,0) {$\bf \textcolor{blue}{v_1}$};
																		\node (v2) at (7,0) {$\bf \textcolor{blue}{v_2}$};
																					\node (v3) at (8,0) {$\bf \textcolor{blue}{v_3}$};
																		\node (v4) at (9,0) {$\bf \textcolor{blue}{v_4}$};
																			\node (v5) at (10,0) {$\bf \textcolor{blue}{v_5}$};

				 \node[draw=none] at (1,-0.5) {$\bf H$};
											    \node[draw=none] at (11,-1) {$B$};
																			
			\node (r1) at (3,-1) {$~~$};
		\node (r2) at (4,-1) {$~~$};
		\node (r3) at (5,-1) {$~~$};
		\node (r4) at (6,-1) {$~~$};
		\node (r5) at (7,-1) {$~~$};
		\node (r6) at (8,-1) {$~~$};
		
		\draw (r1) -- (v1);
			\draw (r1) -- (v2);
				\draw (r1) -- (v3);
					\draw (r1) -- (v4);
						\draw (r1) -- (v5);
		
			\draw (r6) -- (u1);
		\draw (r6) -- (u2);
		\draw (r6) -- (u3);
		\draw (r6) -- (u4);
		\draw (r6) -- (u5);
		\end{tikzpicture}
		\caption{\label{fig:H} An ilustration of the construction of the graph $H$ given the graph $G$. $X$ is constructed by creating five copies for each vertex in $A$. } 
	}
	\end{figure}
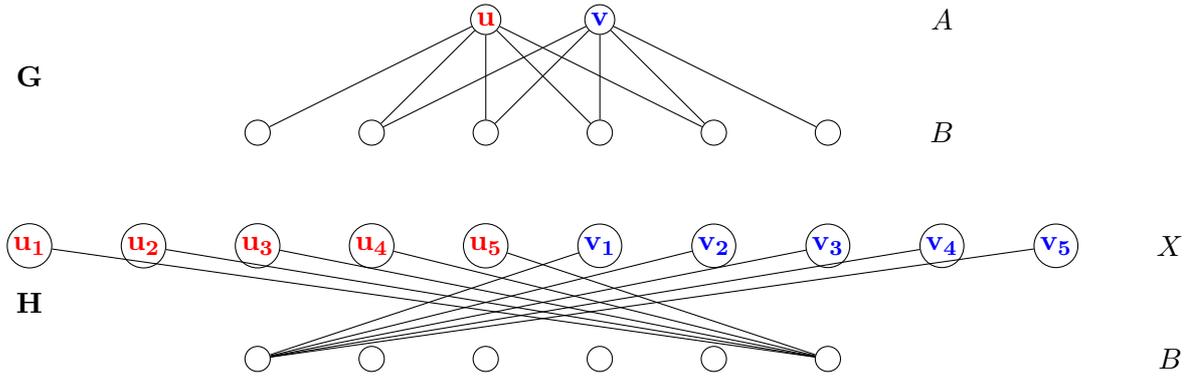

More concretely, let $\eps \in \left(0,0.1^{10}\right)$ 
such that $\eps^{-1} \in \mathbb{N}$ and let $n \in \mathbb{N}_{\textnormal{even}}$ (i.e., $n$ is an even integer). Define \begin{equation}
	\label{eq:val2}
	\begin{aligned}
		p =  \frac{\eps^{-3} \cdot n}{2},~~~~~~~~
		q =  \eps^{-3} \cdot n^t,~~~~~~~~
		\delta = \eps^{100}.
	\end{aligned}
\end{equation}
Note that $p,q,\delta^{-\frac{1}{2}} \in \mathbb{N}$ and that $\delta \in \left(0,0.1^{10} \right)$. Let $G = (A,B,E)$ be a bipartite graph  with $|A| = n$ and $|B| = n^{t}, t = \delta^{-\frac{1}{2}}$ (as the graph described in Lemma~\ref{lem:Pas}). 
We define the {\em U-MSB graph} $H = (X,B, \tilde{E})$ of $G$ by generating $q = \eps^{-3} \cdot n^t$ disjoint {\em copies} for each vertex in $A$. Formally, let $X = \{a_i~|~ i \in [q], a \in A\}$; also, let $\tilde{E} = \{(a_i,b) ~|~ i \in [q], (a,b) \notin E\}$; that is, we connect all copies of a vertex to all vertices in $B$ which are not connected by an edge to the original vertex in $A$ (see Figure~\ref{fig:H} for an illustration of the construction). We also define
for the vertex weights and costs as follows. 
 $\forall b \in B: w(b) = p$, $\forall x \in X: w(x) = 1$, and $\forall v \in X \cup B:~ c(v) = 1$. That is, the vertices of $B$ have weight $p$ that is much larger than the weight of the vertices in $X$. Also, note that the costs are uniform and we have a U-MSB instance. Finally, let the cardinality constraint (budget) be $\beta = q \cdot \frac{n}{2}+n^t$ and define the {\em reduced U-MSB instance} \begin{equation}
 	\label{eq:inst}
 	\ci = (X,B,\tilde{E},\beta,c,w).
 \end{equation} Note that $\ci$ depends on 
$\eps$, $n$, and $G$; in addition, since $n \in \mathbb{N}_{\textnormal{even}}$ it holds that $\beta \in \mathbb{N}$.  

Intuitively, let us consider the different types of solutions a U-MSB approximation algorithm $\cA$ may return for the reduced instance $\cI$. We show that the only solution that is strictly better than a $2$-approximation must take a sufficient number of vertices from each side of the graph $H$. One option for a solution is to take all vertices from $B$ (feasible under the budget constraint and is an independent set in $H$); this solution is at most a $2$-approximation for the optimum. A second option is a solution that is a subset of $X$ (without violating the budget constraint), which cannot do better than a $2$-approximation. By Lemma~\ref{lem:Pas}, we cannot expect that an algorithm for U-MSB would return a solution with many vertices from each side of the graph $H$. Hence, an approximation ratio much better than $2$
 is unlikely to exist. This gives the idea for the proof of Theorem~\ref{thm:KP}.

\noindent
{\bf Proof of Theorem~\ref{thm:KP}:}
We show the hardness result by distinguishing between the two cases in Lemma~\ref{lem:Pas}. Let $0 \leq \eps' \leq 1$, and let $$\eps = \max \{\tilde{\eps} \in \left(0,0.1^{10}\right)~|~\tilde{\eps}^{(-1)} \in \mathbb{N}, \tilde{\eps} \leq \eps'\}.$$ Observe that $\eps \in (0,0.1^{10})$, $\eps \leq \eps'$, and $\eps^{-1} \in \mathbb{N}$. Thus, 
 if there is no $(2-\eps)$-approximation for U-MSB then there is no $(2-\eps')$-approximation for U-MSB as well. 
Let $n \in \mathbb{N}_{\textnormal{even}}$ and let $\delta = \eps^{100}$ (as in \eqref{eq:val2}). Also, let $G = (A,B,E)$ be a bipartite graph with $|A| = n$ and $|B| = n^{t}, t = \delta^{-\frac{1}{2}}$. Observe that $G$ is a graph which satisfies the conditions of Lemma~\ref{lem:Pas}. 

%

Assume towards a contradiction that there is a $(2-\eps)$-approximation algorithm $\cA$ for U-MSB. We show below that using $\cA$ we can distinguish between the two cases of Lemma~\ref{lem:Pas} for $G$. Consider the U-MSB instance $\ci$ as defined in \eqref{eq:inst} for $\eps,n$, and $G$. Let $\OPT_y$ and $\OPT_n$ be some optimum solutions for $\ci$ in the "yes" case and in the "no" case, respectively. We show that the ratio between the optimum values of $\OPT_y$ and $\OPT_n$  (in the "yes" and "no" cases, respectively), is strictly larger than $(2-\eps)$. Hence, a $(2-\eps)$-approximation algorithm can also decide the two cases of Lemma~\ref{lem:Pas}, contradicting the complexity assumption SSEH. We consider the "no" and "yes" cases in the following auxiliary claims.

\begin{claim}
	\label{clm:yes}
	If $G$ contains $K_{\left(\frac{1}{2}-\delta\right) \cdot n,\left(1-\delta^{\frac{1}{10}}\right) \cdot n^t}$ as a subgraph (the "yes" case), then $$\OPT(\cI) = w\left(\OPT_y \right)\geq q \cdot (\frac{1}{2}-\delta) \cdot n +(1-\delta^{\frac{1}{10}}) \cdot n^t \cdot p.$$ 
\end{claim}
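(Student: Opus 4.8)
The plan is to exhibit a single explicit feasible solution $S$ for $\cI$ whose weight is exactly the claimed quantity; since $\OPT_y$ is an optimum solution of $\cI$ in the ``yes'' case, the inequality $w(\OPT_y) \ge w(S)$ then finishes the proof. Let $K$ be the vertex set of a copy of $K_{(\frac{1}{2}-\delta)\cdot n,\,(1-\delta^{1/10})\cdot n^t}$ in $G$; by discarding vertices if necessary I may assume that $K_A := K \cap A$ has exactly $(\frac{1}{2}-\delta)\cdot n$ vertices and $K_B := K \cap B$ has exactly $(1-\delta^{1/10})\cdot n^t$ vertices, with every vertex of $K_A$ adjacent in $G$ to every vertex of $K_B$. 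I would then take
$$S = \{\,a_i \mid a \in K_A,\ i \in [q]\,\} \cup K_B,$$
i.e. all $q$ copies in $X$ of every vertex of $K_A$, together with all the chosen $B$-vertices.

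The first step is to verify that $S$ is an independent set in $H$. As $H$ is bipartite with sides $X$ and $B$, no edge of $\tilde E$ lies inside $X$ or inside $B$, so it suffices to rule out edges between $S \cap X$ and $S \cap B$. By the definition of $\tilde E$ (given right after \eqref{eq:val2}), a copy $a_i$ and a vertex $b \in B$ are joined in $H$ exactly when $(a,b) \notin E$; equivalently, they are non-adjacent precisely when $(a,b) \in E$. Since $K$ is a biclique, $(a,b) \in E$ for every $a \in K_A$ and $b \in K_B$, hence no edge of $\tilde E$ joins a selected copy to a selected $B$-vertex, and $S$ is independent.

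The second step is feasibility under the cardinality budget $\beta = q\cdot\frac n2 + n^t$ from \eqref{eq:inst}. Because all costs are unit, $c(S) = |S| = q\,|K_A| + |K_B| = q\cdot(\tfrac12-\delta)\cdot n + (1-\delta^{1/10})\cdot n^t$, and using $q\cdot(\tfrac12-\delta)\cdot n \le q\cdot\frac n2$ together with $(1-\delta^{1/10})\cdot n^t \le n^t$ shows $c(S) \le \beta$. Finally I would compute the weight: every copy in $X$ has weight $1$ and every vertex of $B$ has weight $p$, so $w(S) = q\cdot(\tfrac12-\delta)\cdot n + (1-\delta^{1/10})\cdot n^t \cdot p$, which is exactly the asserted bound; optimality of $\OPT_y$ then gives $w(\OPT_y) \ge w(S)$.

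There is no genuine obstacle here: the only conceptual point is the complementation built into $\tilde E$, which turns the biclique of $G$ into an independence pattern between copies and $B$-vertices of $H$, while the feasibility check rests solely on the two slack terms $q\delta n \ge 0$ and $\delta^{1/10} n^t \ge 0$. The one routine caveat is the integrality of $(\tfrac12-\delta)n$ and $(1-\delta^{1/10})n^t$, which is treated exactly as in the biclique parameters of Lemma~\ref{lem:Pas}; rounding these counts down leaves the stated bound intact.
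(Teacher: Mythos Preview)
Your proof is correct and follows essentially the same approach as the paper: you exhibit the feasible solution consisting of all $q$ copies of the $A$-side of the biclique together with its $B$-side, verify independence in $H$ via the complementation in $\tilde E$, check the budget, and compute the weight. The paper's argument is identical up to notation (it calls your $S$ by $R$ and your $\{a_i : a\in K_A,\, i\in[q]\}$ by $Q$).
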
 
\begin{claimproof}
	Assume that $G$ contains $K_{\left(\frac{1}{2}-\delta\right) \cdot n,\left(1-\delta^{\frac{1}{10}}\right) \cdot n^t}$ as a subgraph (the "yes" case) and let $K$ be such a subgraph. Let $Q = \{a_i~| i \in [q], a \in K \cap A\}$ be all copies of vertices from $K \cap A$ and let $R = Q \cup (K \cap B)$. Observe that  for all $a \in K \cap A$ and $b \in K \cap B$ it holds that $(a,b) \in E$; thus, for all $i \in [q]$ it holds that $(a_i,b) \notin \tilde{E}$, and we conclude that $R$ is an independent set in $H = (X,B,\tilde{E})$ (see Figure~\ref{fig:H}). Thus, 
	$R$ is a feasible solution for $\cI$, since it satisfies the budget constraint: $$ c(R) = \sum_{r \in R} c(r) = |Q|+|K \cap B| \leq q \cdot (\frac{1}{2}-\delta) \cdot n+(1-\delta^{\frac{1}{10}}) \cdot n^t \leq \frac{q \cdot n}{2}+n^t = \beta.$$  Hence, the weight of $R$ is a lower bound for the weight of the optimum of $\ci$ in the "yes" case: \begin{equation*}
		\label{eq:R}
		\begin{aligned}
			\OPT(\cI) = w\left(\OPT_y \right) \geq w(R) = w(Q)+w(K \cap B) \geq q \cdot (\frac{1}{2}-\delta) \cdot n +(1-\delta^{\frac{1}{10}}) \cdot n^t \cdot p.
		\end{aligned}
	\end{equation*}
	\end{claimproof}

\begin{claim}
	\label{clm:no}
	If $G$ does not contain $K_{ \delta \cdot n,  \delta^{\frac{1}{10}} \cdot n^t}$ as a subgraph  (the "no" case), then $$	\OPT(\cI) = w\left(\OPT_n \right) \leq \max \{q \cdot \delta \cdot n+ n^t \cdot p,  \beta+\delta^{\frac{1}{10}} \cdot n^t \cdot p\}. $$
\end{claim}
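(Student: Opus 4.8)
The plan is to analyze an arbitrary feasible solution $S$ for $\ci$ by splitting it according to the two sides of the bipartite graph $H = (X,B,\tilde{E})$. Write $S_X = S \cap X$ and $S_B = S \cap B$, so that $w(S) = |S_X| + p \cdot |S_B|$ (recall every vertex of $X$ has weight $1$, every vertex of $B$ has weight $p$, and the cost is uniform, so feasibility just means $|S_X| + |S_B| \le \beta$). Let $A' = \{a \in A \mid a_i \in S_X \text{ for some } i \in [q]\}$ be the set of original vertices of $A$ having a copy inside $S$. The first key fact I would establish is that $A' \cup S_B$ forms a biclique in $G$: since an edge $(a_i,b)$ lies in $\tilde{E}$ exactly when $(a,b) \notin E$, the independence of $S$ in $H$ forces $(a,b) \in E$ for every $a \in A'$ and every $b \in S_B$.

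With this structural fact in hand, I would perform a case distinction on the size of $A'$ relative to the threshold $\delta \cdot n$. In the first case, $|A'| < \delta \cdot n$. Since $S_X$ contains at most $q$ copies of each vertex of $A'$, we have $|S_X| \le q \cdot |A'| < q \cdot \delta \cdot n$, while trivially $|S_B| \le |B| = n^t$. This already yields $w(S) < q \cdot \delta \cdot n + n^t \cdot p$, matching the first term in the claimed maximum.

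In the second case, $|A'| \ge \delta \cdot n$. Here I would invoke the "no" case hypothesis together with the biclique fact: if $|S_B|$ were at least $\delta^{\frac{1}{10}} \cdot n^t$, then $A' \cup S_B$ would contain $K_{\delta \cdot n, \delta^{\frac{1}{10}} \cdot n^t}$ as a subgraph of $G$, contradicting soundness. Hence $|S_B| < \delta^{\frac{1}{10}} \cdot n^t$. Now I would exploit the budget constraint to trade $X$-vertices for budget room: from $|S_X| \le \beta - |S_B|$ and $p \ge 1$ we get $w(S) = |S_X| + p \cdot |S_B| \le \beta + (p-1)\cdot |S_B| < \beta + \delta^{\frac{1}{10}} \cdot n^t \cdot p$. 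This matches the second term, and taking the maximum over the two cases bounds the optimum $\OPT_n$.

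The argument is essentially a clean case analysis, so I do not anticipate a serious obstacle; the one point requiring care is the correct translation of ``$S$ independent in $H$'' into ``$A' \cup S_B$ is a biclique in $G$'', i.e.\ keeping straight that $\tilde{E}$ encodes the complement of $E$ across the sides, since reversing this inversion would break the reduction. A second mild subtlety is recognizing that, because $p$ is large, the budget should be spent preferentially on $B$-vertices, which is precisely what makes the inequality $|S_X| \le \beta - |S_B|$ the right tool in the second case.
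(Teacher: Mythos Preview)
Your argument is correct and follows essentially the same approach as the paper: a two-case analysis that exploits the fact that the set $A'$ of original $A$-vertices represented in $S_X$, together with $S_B$, forms a biclique in $G$, and then uses the soundness hypothesis to force one of the two sides to be small. The only cosmetic difference is that the paper pivots on $|S_B|$ relative to $\delta^{1/10} n^t$ (and in the small-$|S_B|$ case simply bounds $|S_X|\le\beta$ rather than your slightly sharper $|S_X|\le\beta-|S_B|$), whereas you pivot on $|A'|$ relative to $\delta n$; the two splits are dual and yield exactly the same pair of bounds.
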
 
\begin{claimproof}
	Assume that $G$ does not contain $K_{ \delta \cdot n,  \delta^{\frac{1}{10}} \cdot n^t}$ as a subgraph  (the "no" case). Recall that $\OPT_n$ is some optimal solution for $\cI$ in the "no" case. Consider the following complementary cases. \begin{enumerate}
		\item $|\OPT_n \cap B| \leq \delta^{\frac{1}{10}} \cdot n^t$. Observe that from $X$ at most $\beta = \frac{q}{2} \cdot n+n^t$ vertices can be taken without violating the cardinality constraint of $\ci$. Thus, in this case,  \begin{equation}
			\label{eq:OPT1}
			\begin{aligned}
				w(\OPT_n) = w(\OPT_n \cap X)+w(\OPT_n \cap B) \leq \beta \cdot 1+ \delta^{\frac{1}{10}} \cdot n^t \cdot p. 
			\end{aligned}
		\end{equation}

		\item  $|\OPT_n \cap B| > \delta^{\frac{1}{10}} \cdot n^t$. Since $G$ does not contain $K_{ \delta \cdot n,  \delta^{\frac{1}{10}} \cdot n^t}$ as a subgraph, for any $S \subseteq A, T \subseteq B$, $|T| > \delta^{\frac{1}{10}} \cdot n^t$ such that $T \cup S$ is a biclique, it holds that $|S| \leq \delta \cdot n$. Hence, by the definition of $\tilde{E}$, there can be copies of at most $\delta \cdot n$ distinct vertices from $A$ in $\OPT_n$ in this case. Thus, since each vertex has $q$ copies:
		
		\begin{equation}
			\label{eq:OPT2}
			\begin{aligned}
				w(\OPT_n) = w(\OPT_n \cap X)+w(\OPT_n \cap B) \leq q \cdot \delta \cdot n+ n^t \cdot p. 
			\end{aligned}
		\end{equation} 
	\end{enumerate} By \eqref{eq:OPT1} and \eqref{eq:OPT2} we get: 	\begin{equation*}
		\label{eq:OPT3}
		\begin{aligned}
		\OPT(\cI) = w(\OPT_n) \leq \max \{q \cdot \delta \cdot n+ n^t \cdot p,  \beta+\delta^{\frac{1}{10}} \cdot n^t \cdot p\}. 
		\end{aligned}
	\end{equation*} 
\end{claimproof}
	
%

	

To complete the proof of the theorem, we first simplify the notation. Let $\sigma =  \frac{\eps^{-3} \cdot n^{\eps^{-50}+1}}{2}$. Using \eqref{eq:val2}, we can bound the values of the expressions above w.r.t. $\sigma$: 
\begin{equation}
	\label{eq:val0}
	\begin{aligned}
		\beta ={} &  q \cdot n \cdot \frac{1}{2}+n^t = \frac{\eps^{-3} \cdot n^{\eps^{-50}+1}}{2} +n^t = \sigma+n^{\eps^{-50}} \leq \sigma+2\eps^3 \cdot \sigma \\
		\\
		n^t \cdot p ={} &   n^{\eps^{-50}}  \cdot \frac{\eps^{-3} \cdot n}{2} = \sigma\\
		\\
	q \cdot \delta \cdot n={} &  \eps^{-3} \cdot n^{\eps^{-50}+1} \cdot \eps^{100} = 2 \eps^{97} \cdot \sigma \leq \eps^{3} \cdot \sigma.
	\end{aligned}
\end{equation} 

Therefore, by \eqref{eq:val0} we have
\begin{equation}
	\label{eq:val}
	\begin{aligned}
		q \cdot \delta \cdot n+ n^t \cdot p \leq {} & \sigma + \eps^{3} \cdot \sigma \\
		\\
		\beta+\delta^{\frac{1}{10}} \cdot n^t \cdot p \leq{} & \sigma+2\eps^3 \cdot \sigma+\eps^{10} \cdot \sigma < (1+\eps^2) \cdot \sigma \\
		\\
		q \cdot (\frac{1}{2}-\delta) \cdot n +(1-\delta^{\frac{1}{10}}) \cdot n^t \cdot p \geq{} & \sigma -\eps^3 \cdot \sigma +\sigma -\eps^{10} \cdot \sigma \geq (2-\eps^2) \cdot \sigma
	\end{aligned}
\end{equation}

Finally, by Claim~\ref{clm:yes}, Claim~\ref{clm:no}, and \eqref{eq:val} we have: \begin{equation}
	\label{eq:final}
	\begin{aligned}
		\frac{w\left(\OPT_y\right)}{w\left(\OPT_n\right)} \geq \frac{(2-\eps^2) \cdot \sigma}{(1+\eps^2) \cdot \sigma} > 2-\eps. 
	\end{aligned}
\end{equation} The second inequality holds since $0<\eps<0.1$. Hence, by \eqref{eq:final} the $(2-\eps)$-approximation $\cA$ for U-MSB can always distinguish between the "yes" and "no" cases for $G$, contradicting Lemma~\ref{lem:Pas}.  \qed

\section{A Matching Upper Bound for MSP}
\label{sec:3}

In this section, we give a $2$-approximation algorithm for MSP, matching the $(2-\eps)$ lower bound shown in Theorem~\ref{thm:KP}. Let $\ci = (V,E,w,c,B)$ be an MSP instance and  let $W(\ci) = \max_{v \in I} w(v)$ be the maximum weight of a vertex of $V$. 
We use the results of \cite{kulik2021lagrangian} designed for a general family of subset selection problems under a budget constraint: 
the main approach of~\cite{kulik2021lagrangian} is to
exploit the {\em Lagrangian relaxation} of the underlying problem (i.e., relax the budget constraint) and to use an algorithm for the unbudgeted version of the problem. The following is a compact representation of the relevant theorem in \cite{kulik2021lagrangian}. 
\begin{lemma}
	\label{thm:ariel} \cite{kulik2021lagrangian}
	Given an \textnormal{MSP} instance $\ci = (V,E,w,c,B)$ and $\eps>0$, such that the maximum weight independent set problem admits a polynomial-time $\rho$-approximation on $G=(V,E)$ for some $\rho \in (0,1]$,
	there is an algorithm \textnormal{\textsf{Lagrangian}} that
	 returns in time $O(\log(W(\ci))+\log(B)+\log(\eps^{-1}) \cdot \textnormal{poly}(|\ci|))$ a solution for $\ci$ of weight at least $\left(  \frac{\rho}{1+\rho}-\eps\right) \cdot \OPT(\ci)$.
\end{lemma}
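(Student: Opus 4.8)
The plan is to realize the algorithm \textnormal{\textsf{Lagrangian}} through the Lagrangian relaxation of the budget constraint, calling the $\rho$-approximation for MWIS as an oracle, and then converting a near-optimal solution of the relaxation into a feasible solution for $\ci$. For a multiplier $\lambda \ge 0$ I would define the modified weights $w_\lambda(v) = w(v) - \lambda c(v)$ and the relaxed value $L(\lambda) = \max_{S \in \textsf{IS}(G)} \sum_{v \in S} w_\lambda(v)$. The first step is weak duality: for any feasible $S$ we have $w(S) = w_\lambda(S) + \lambda c(S) \le L(\lambda) + \lambda B$, so
\[ \OPT(\ci) \;\le\; L(\lambda) + \lambda B \qquad \text{for all } \lambda \ge 0. \]
Since $w_\lambda$ may be negative, I would restrict attention to $V_\lambda = \{v : w_\lambda(v) \ge 0\}$; an optimal relaxed solution never uses a vertex outside $V_\lambda$, so running the given $\rho$-approximation for MWIS on $G[V_\lambda]$ with weights $w_\lambda$ returns an independent set $S_\lambda$ with $w_\lambda(S_\lambda) \ge \rho \cdot L(\lambda)$. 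A key structural fact to record is that every $v \in S_\lambda$ has density $w(v)/c(v) \ge \lambda$.

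Next I would locate the right multiplier by binary search. After discarding vertices with $c(v) > B$ (which can never be used), the relevant range is $\lambda \in [0, \lambda_{\max}]$ with $\lambda_{\max}$ of order $W(\ci)$, beyond which $S_\lambda = \emptyset$. Binary searching this interval to a precision governed by $\eps$ costs $O(\log W(\ci) + \log B + \log \eps^{-1})$ oracle calls, matching the claimed running time. The search is driven to a near-critical multiplier $\lambda$ at which an arbitrarily small change in $\lambda$ flips the cost $c(S_\lambda)$ across $B$, yielding two $\rho$-approximate solutions at nearly equal multipliers: a feasible one $S_-$ with $c(S_-) \le B$, and an infeasible one $S_+$ with $c(S_+) > B$.

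For the combination step I would extract from $S_+$ a feasible set $P$ by taking the maximal prefix, in decreasing order of density, that fits within $B$, and I would also compute a maximum-weight feasible singleton $S^\star$; the algorithm returns the heaviest of $S_-$, $P$, and $S^\star$. The analysis balances two monotone bounds. From weak duality, $w(S_-) \ge w_\lambda(S_-) \ge \rho(\OPT(\ci) - \lambda B)$, which decreases in $\lambda$; on the other side, the density bound $w(v)/c(v) \ge \lambda$ on $S_+$ forces the extracted feasible weight to grow with $\lambda B$, which increases in $\lambda$. Splitting on whether $\lambda B$ lies below or above the threshold $\tfrac{\rho}{1+\rho}\OPT(\ci)$ then shows that one of the returned candidates has weight at least $\left(\tfrac{\rho}{1+\rho} - \eps\right)\OPT(\ci)$, with the $\eps$ absorbing both the binary-search imprecision and the residual gap between the two bracketing multipliers.

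The hard part is the combination, and specifically the fact that the MWIS oracle is only a $\rho$-approximation and is \emph{not} monotone in $\lambda$: neither bracketing solution is guaranteed to have cost close to $B$, so the infeasible $S_+$ may overshoot the budget by a large factor, and repairing it by a single prefix can lose a constant factor. This is exactly what threatens the tight constant, since a naive prefix-plus-overflow argument only delivers $\tfrac{\rho}{2\rho+1}$ rather than $\tfrac{\rho}{1+\rho}$ (for $\rho = 1$ this is the difference between $\tfrac13$ and the desired $\tfrac12$). The essential contribution of \cite{kulik2021lagrangian} is to account for the overflow element jointly across the feasible solution, the prefix, and the singleton so that this factor is not lost, and to propagate the $\eps$ error cleanly through the binary search; closing that gap is where I expect the real work to lie.
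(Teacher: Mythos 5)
First, note what you are comparing against: the paper does not prove this lemma at all. It is stated with the citation \cite{kulik2021lagrangian} and explicitly described as ``a compact representation of the relevant theorem'' of that work; the paper uses it strictly as a black box inside Algorithm~\ref{alg:lag} and Lemma~\ref{lem:MSP}. So the honest benchmark is whether your sketch would stand on its own as a proof of the cited theorem.

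It does not, and the gap is the one you yourself flag. The framework you set up is sound and matches the skeleton of the cited technique: weak duality $\OPT(\ci) \leq L(\lambda) + \lambda B$, restriction to $V_\lambda = \{v : w(v) \geq \lambda c(v)\}$ so the $\rho$-approximate MWIS oracle gives $w_\lambda(S_\lambda) \geq \rho \cdot L(\lambda)$, the density bound on $S_\lambda$, and a binary search contributing the $O(\log W(\ci) + \log B + \log \eps^{-1})$ factor. Case~1 of your balancing argument ($\lambda B \leq \frac{\rho}{1+\rho}\OPT(\ci)$, where $S_-$ already suffices) is complete. But Case~2 is where the theorem actually lives, and there your proposed repair --- a maximal density-ordered prefix $P$ of $S_+$ plus a best feasible singleton $S^\star$ --- only yields $w(P) + w(v^\star) \geq \lambda B$ for the overflow element $v^\star$, hence $\max\{w(P), w(v^\star)\} \geq \lambda B / 2$, which after rebalancing the threshold gives $\frac{\rho}{2\rho+1}$, not $\frac{\rho}{1+\rho}$. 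Your final paragraph concedes that closing this factor-of-two loss in the overflow accounting is ``where the real work lies'' and attributes it to \cite{kulik2021lagrangian}; that is an accurate diagnosis, but it means the decisive step of the proof is deferred rather than supplied. A proof attempt that establishes $\frac{\rho}{2\rho+1} - \eps$ and gestures at the mechanism needed for $\frac{\rho}{1+\rho} - \eps$ is a correct reconstruction of the setup, not a proof of the statement --- and for this paper the constant is not cosmetic: the matching lower bound of Theorem~\ref{thm:KP} is exactly $2-\eps$, so a $3$-approximation (the $\rho = 1$ case of your bound) would not close the gap the paper claims to close.
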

We use Lemma~\ref{lem:grot} and apply the results of \cite{kulik2021lagrangian} with a suitable choice for $\eps$, which gives a good trade-off between running time and approximation guarantee. The pseudocode is given in Algorithm~\ref{alg:lag}. 
 \begin{algorithm}[h]
 	\label{alg:1}
	\caption{$\textsf{MSP-Solve}(\ci = (V,E,w,c,B))$}
	\label{alg:lag}
	
	Let $\eps = \frac{1}{8 \cdot |V| \cdot W(\ci)}$.
	
	Return $\textsf{Lagrangian}(\ci,\eps)$. 
\end{algorithm}

 \begin{lemma}
	\label{lem:MSP}
	Algorithm~\ref{alg:lag} is a $2$-approximation for \textnormal{MSP}.
\end{lemma}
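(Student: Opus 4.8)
The plan is to instantiate the Lagrangian-relaxation machinery of Lemma~\ref{thm:ariel} with the exact MWIS solver of Lemma~\ref{lem:grot}, and then to close the resulting $\eps$-gap by an integrality (rounding) argument tailored to the specific choice $\eps = \frac{1}{8 \cdot |V| \cdot W(\ci)}$ made in Algorithm~\ref{alg:lag}. Recall that here a $2$-approximation means returning a solution $S$ with $w(S) \ge \frac{1}{2}\OPT(\ci)$, so this is exactly the target bound.

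First I would observe that, since $G$ is perfect, Lemma~\ref{lem:grot} provides a polynomial-time \emph{exact} algorithm for MWIS on $G$; that is, the hypothesis of Lemma~\ref{thm:ariel} holds with $\rho = 1$, and hence $\frac{\rho}{1+\rho} = \frac{1}{2}$. Feeding this solver together with the chosen $\eps$ into \textsf{Lagrangian}, Lemma~\ref{thm:ariel} guarantees that the returned solution $S$ satisfies $w(S) \ge \left(\frac{1}{2} - \eps\right)\OPT(\ci)$.

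The crux, and the step I expect to be the main obstacle, is upgrading the factor $\frac{1}{2} - \eps$ to exactly $\frac{1}{2}$. The idea is that the additive loss $\eps \cdot \OPT(\ci)$ is too small to matter once the weights are integral. Concretely, I would (i) bound $\OPT(\ci) \le |V| \cdot W(\ci)$, since any independent set has at most $|V|$ vertices, each of weight at most $W(\ci)$; (ii) substitute the chosen $\eps$ to get $\eps \cdot \OPT(\ci) \le \frac{1}{8}$, hence $2 w(S) \ge (1 - 2\eps)\OPT(\ci) \ge \OPT(\ci) - \frac{1}{4}$; and (iii) invoke integrality, assuming w.l.o.g.\ that the weights are non-negative integers (rescaling by the least common denominator otherwise, which affects only $W(\ci)$ and the encoding length polynomially), so that both $2 w(S)$ and $\OPT(\ci)$ are integers. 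Then $2 w(S) \ge \OPT(\ci) - \frac{1}{4}$ forces $2 w(S) \ge \OPT(\ci)$, i.e.\ $w(S) \ge \frac{1}{2}\OPT(\ci)$, as desired. I would stress that the generous constant $8$ in the definition of $\eps$ is present only to leave comfortable slack: any constant making $2\eps \cdot \OPT(\ci) < 1$ already suffices.

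Finally I would verify that Algorithm~\ref{alg:lag} runs in polynomial time. With $\eps^{-1} = 8 \cdot |V| \cdot W(\ci)$ we have $\log(\eps^{-1}) = O(\log|V| + \log W(\ci))$, and both $\log W(\ci)$ and $\log B$ are bounded by the input length; hence the running-time expression $O\big(\log W(\ci) + \log B + \log(\eps^{-1}) \cdot \textnormal{poly}(|\ci|)\big)$ supplied by Lemma~\ref{thm:ariel} is polynomial in $|\ci|$. Combined with the lower bound of Theorem~\ref{thm:KP}, this shows the factor $2$ is tight for MSP, and therefore also for the special case MSB.
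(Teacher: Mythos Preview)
Your proposal is correct and follows essentially the same approach as the paper: instantiate Lemma~\ref{thm:ariel} with $\rho=1$ via Lemma~\ref{lem:grot}, bound $\eps\cdot\OPT(\ci)\le\frac{1}{8}$ using $\OPT(\ci)\le |V|\cdot W(\ci)$, and close the gap by an integrality argument after scaling to integer weights. Your device of multiplying through by $2$ so that both $2w(S)$ and $\OPT(\ci)$ are integers is a slightly cleaner variant of the paper's parity case split, but the content is the same.
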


\begin{proof}
	Let  $\ci = (I,E,w,c,B)$ be an MSP instance. By scaling, we may assume that the input parameters are all integers and therefore $\OPT(\ci) \in \mathbb{N}$. By Lemma~\ref{thm:ariel}, it holds that $S = \textsf{MSP-Solve}(\ci)$ is a solution of $\ci$ of weight at least \begin{equation}
		\label{eq:OPtimum}
		\begin{aligned}
			w(S) \geq{} & \left( \frac{\rho}{1+\rho}-\eps\right) \cdot \OPT(\ci) \\
			\geq{} & \left( \frac{1}{1+1}-\eps\right) \cdot \OPT(\ci) \\
			={} &  \frac{\OPT(\ci)}{2} - \frac{\OPT(\ci)}{8 \cdot |V| \cdot W(\ci)} \\
			\geq{} &  \frac{\OPT(\ci)}{2} - \frac{1}{8} \\
		\end{aligned}
	\end{equation} The second inequality holds because $\rho = 1$ by Lemma~\ref{lem:grot} (i.e., MWIS can be solved in polynomial time on perfect graphs).
	The equality follows by the selection of $\eps$. The last inequality holds since $|V|\cdot W(\ci)$ is a trivial upper bound on $\OPT(\ci)$. Recall that $\OPT(\ci) \in \mathbb{N}$. Therefore, if $\OPT(\ci)$ is even, it holds that $ \frac{\OPT(\ci)}{2}-\frac{1}{8}> \ceil{\frac{\OPT(\ci)}{2}}-1$; otherwise, $\OPT(\ci)$ is odd and it holds that $ \frac{\OPT(\ci)}{2}-\frac{1}{8} =  \frac{\OPT(\ci)-1}{2}+\frac{1}{2} - \frac{1}{8} > \ceil{\frac{\OPT(\ci)}{2}}-1$. Hence, by \eqref{eq:OPtimum} it holds that $w(S) \geq \frac{\OPT(\ci)}{2}$. Finally, by the selection of $\eps$ it holds that the running time of the algorithm is polynomial in $|\ci|$, the encoding size of the instance.
\end{proof}

\noindent {\bf Proof of Theorem~\ref{thm:MSP}:} The proof follows immediately from Lemma~\ref{lem:MSP}. \qed

\comment{

In the following, we define a {\em residual instance} of the problem that describes the set of items of the graph that can be added to a partial solution. Formally, \begin{definition}
	\label{def:resd}
	For $S \subseteq V$, let $I / S = \{v \in I \setminus S~|~ \forall s \in S: (s,v) \notin E, c(v) \leq B-c(S), w(v) \leq \min_{u \in S} w(u)\}$. 
Define {\em the residual instance of $S$} as $\ci / S = (I / S,w,c, B - c(S))$. 
\end{definition}

Our algorithm for MSP finds a set $S^*$ of the three items with largest weight in an optimal solution using enumeration; then, additional items are added using an algorithm \textsf{Lagrangian} computed on the residual instances of $S^*$. We use the results of \cite{kulik2021lagrangian} designed for a general family of subset selection problems under a budget constraint; their approach exploits the {\em Lagrangian relaxation} of the underlying problem (i.e., relaxing the budget constraint) and use an algorithm for the non-budgeted version of the problem. Using Lemma~\ref{lem:grot} and the results of \cite{kulik2021lagrangian} we have: 	\begin{lemma}
	\label{lem:lagrangian}
	For any $\eps > 0$, there is an algorithm \textnormal{\textsf{Lagrangian}} that is a $(2+\eps)$-approximation for \textnormal{MSP}. 
\end{lemma}

The pseudocode is given in Algorithm~\ref{alg:KP}. The proof of the next theorem follows by the approximation guarantee of Algorithm \textsf{Lagrangian}, the properties of $S^*$, and Definition~\ref{def:resd} 
\comment{. \begin{lemma}
		\label{lem:KPgreedy}
		There is a polynomial algorithm \textnormal{\textsf{GreedyKP}} that given a \textnormal{KP} instance $\ci = (V,w,c,B)$ returns a solution $T$ of $\ci$ such that $w(T) \geq \OPT(\ci) - \max_{v \in V} w(v)$. 
	\end{lemma}
}
\begin{algorithm}[h]
	\caption{$\textsf{MSP-Solve}(\ci = (L,R,E,w,c,B))$}
	\label{alg:KP}
	
	Initialize $A \leftarrow \emptyset$.\label{step:KPinit}
	
	\For{$S \subseteq V, |S| \leq 3, c(S) \leq B$\label{step:KPfor}}{
		
				Let $Q(S) \leftarrow \textsf{Lagrangian}(\ci / S)$.\label{step:KPcompute}
		
		
		$A \leftarrow \argmax_{X \in \{A,Q(S)\}} w(X)$.\label{step:KPupdate}
		
	}
	
	Return $A$.\label{step:KPreturn}
	
\end{algorithm}

}

\comment{

In this section, we give a simple $2$-approximation algorithm for MSB, matching the $2-\eps$ lower bound and the $2$-approximation for the special case of U-MSB. Let $\ci = (L,R,E,w,c,B)$ be a MSB instance. In the following, we define a {\em residual instance} of the problem that describes the set of items from one side of the graph that can be added to a partial solution. Formally, \begin{definition}
	\label{def:resd}
	For $S \subseteq V$, let $V / S = \{v \in V \setminus S~|~ \forall s \in S: (s,v) \notin E, c(v) \leq B-c(S), w(v) \leq \min_{u \in S} w(u)\}$. For $W \in \{L,R\}$, define {\em the residual instance of $S$ and $W$} as $\ci_W / S = (W / S,w,c, B - c(S))$. 
\end{definition}

Our algorithm for MSB finds a set $S^*$ of the three items with largest weight in an optimal solution using enumeration; then, additional items are added using Algorithm \textsf{GreedyKP} computed on the residual instances of $S^*$ and $L,R$. (see Section~\ref{sec:preliminaries}). 
The pseudocode is given in Algorithm~\ref{alg:KP}. The proof of the next theorem follows by Lemma~\ref{lem:KPgreedy} by the simple observation that at least half of the weight of the optimum resides in one of the sides of the graph. \comment{. \begin{lemma}
	\label{lem:KPgreedy}
	There is a polynomial algorithm \textnormal{\textsf{GreedyKP}} that given a \textnormal{KP} instance $\ci = (V,w,c,B)$ returns a solution $T$ of $\ci$ such that $w(T) \geq \OPT(\ci) - \max_{v \in V} w(v)$. 
\end{lemma}
}
\begin{algorithm}[h]
	\caption{$\textsf{MSB-Solve}(\ci = (L,R,E,w,c,B))$}
	\label{alg:KP}
	
	Initialize $A \leftarrow \emptyset$.\label{step:KPinit}
	 
	\For{$S \subseteq V, |S| \leq 3, c(S) \leq B$\label{step:KPfor}}{

Let $Q_L(S) \leftarrow \textsf{GreedyKP}(\ci_L / S)$ and $Q_R(S) \leftarrow \textsf{GreedyKP}(\ci_R / S)$.\label{step:KPcompute}

$A \leftarrow \argmax_{X \in \{A,Q_L(S),Q_R(S)\}} w(X)$.\label{step:KPupdate}

}

Return $A$.\label{step:KPreturn}
	
\end{algorithm}

\begin{theorem}
	\label{thm:KP2}
	Algorithm~\ref{alg:KP} is a $2$-approximation for \textnormal{MSB}. 
\end{theorem}

\begin{proof}
	Let $\ci = (I,E,w,c,B)$ be a MSB instance and let $\OPT$ be some optimal solution of $\ci$. If $|\OPT| \leq 3$ then by Step~\ref{step:KPfor} Algorithm~\ref{alg:KP} returns an optimal solution for $\ci$. Otherwise, there is an iteration of the {\bf for} loop in Step~\ref{step:KPfor} such that $S = \{v_1,v_2,v_3\}$, $v_1,v_2,v_3 \in \OPT$, and for all $i \in [3]$ and $v \in \OPT \setminus S$ it holds that $w(v) \leq w(v_i)$. Thus, \begin{equation}
		\label{eq:sB2}
		\begin{aligned}
			w(A) \geq{} &  \max_{X \in \{A,Q(S)\}} w(X) \\
			\geq{} & w(S)+\left(\frac{w(\OPT)-w(S)}{2} - \max_{v \in \OPT \setminus S} w(v)\right) \\
			={} & w(v_1)+w(v_2)+w(v_3)-\max_{v \in \OPT \setminus S} w(v)+\frac{w(\OPT)-w(S)}{2}   \\
			\geq{} & \frac{w(S) \cdot 2}{3}+\frac{w(\OPT)-w(S)}{2}   \\
			\geq{} & \frac{w(\OPT)}{2}.    \\
		\end{aligned}
	\end{equation} The first inequality holds by Step~\ref{step:KPupdate}. The second inequality holds because at least half of the total weight in $\OPT \setminus S$ is from items from either $L$ or from $R$; in addition, by Lemma~\ref{lem:KPgreedy} the solutions $Q_L(S)$ and $Q_R(S)$ lose at most the weight of a single item compared to the optimum of the residual instance. The third inequality holds because $w(v_1), w(v_2),w(v_3) \geq \max_{v \in \OPT \setminus S} w(v)$ by the definition of $S$. Finally, note that $A$ is a solution of $\ci$ by Algorithm~\ref{alg:KP} and Definition~\ref{def:resd}. 
\end{proof}

}

\comment{
\begin{proof}
	Let $\ci = (I,E,w,c,B)$ be a MSP instance and let $\OPT$ be some optimal solution of $\ci$. If $|\OPT| \leq 3$ then by Step~\ref{step:KPfor} Algorithm~\ref{alg:KP} returns an optimal solution for $\ci$. Otherwise, there is an iteration of the {\bf for} loop in Step~\ref{step:KPfor} such that $S = \{v_1,v_2,v_3\}$ such that $v_1,v_2,v_3 \in \OPT$ and for all $i \in [3]$ and $v \in \OPT \setminus S$ it holds that $w(v) \leq w(v_i)$. \begin{equation}
		\label{eq:sB2}
		\begin{aligned}
			w(A) \geq{} &  \max_{X \in \{A,Q_L(S),Q_R(S)\}} w(X) \\
			 \geq{} & w(S)+\left(\frac{w(\OPT)-w(S)}{2} - \max_{v \in \OPT \setminus S} w(v)\right) \\
			 ={} & w(v_1)+w(v_2)+w(v_3)-\max_{v \in \OPT \setminus S} w(v)+\frac{w(\OPT)-w(S)}{2}   \\
			 	 \geq{} & \frac{w(S) \cdot 2}{3}+\frac{w(\OPT)-w(S)}{2}   \\
			 	 	 \geq{} & \frac{w(\OPT)}{2}.    \\
		\end{aligned}
	\end{equation} The first inequality holds by Step~\ref{step:KPupdate}. The second inequality holds because at least half of the total weight in $\OPT \setminus S$ is from items from either $L$ or from $R$; in addition, by Lemma~\ref{lem:KPgreedy} the solutions $Q_L(S)$ and $Q_R(S)$ lose at most the weight of a single item compared to the optimum of the residual instance. The third inequality holds because $w(v_1), w(v_2),w(v_3) \geq \max_{v \in \OPT \setminus S} w(v)$ by the definition of $S$. Finally, note that $A$ is a solution of $\ci$ by Step~\ref{step:KPinit}, Step~\ref{step:KPfor}, Step~\ref{step:KPcompute}, Step~\ref{step:KPupdate}, Step~\ref{step:KPreturn}, and the definition of residual instance. 
\end{proof}
}

%
%
%

\comment{
	\begin{lemma}
		Assuming SSEH, for every $\eps > 0$, it is NP-hard to, given a hypergraph
		$H = (V_H, E_H)$, distinguish between the following two cases: \begin{itemize}
			\item 	(Completeness) There is a bisection $(T_0, T_1)$ of $V_H$ s.t.
			$|E_H(T_0)|, |E_H(T_1)| > (\frac{1}{2} - \eps) \cdot |E_H|.$
			
			\item (Soundness) For every set $T \subseteq V_H$ of size at most $ \frac{|V_H|}{2}$ it holds that $|E_H(T)| \leq \eps \cdot |E_H|$.
			
		\end{itemize}

	\end{lemma}
}


\comment{
	
	\section{Knapsack with Bipartite Conflict-graph}
	
	In this section we prove the following theorem. \begin{theorem}
		For any $\eps > 0$
	\end{theorem}
	To show the hardness result, we use a reduction from {\em balanced bipartite clique (BBC)}.
	The input to BBC is an $n$ by $n$ bipartite graph $G$ and the output is A $k$ by $k$ complete bipartite subgraph of $G$. The objective is to maximize $k$. We state Lemma 1 in~\cite{feige2002relations}.
	
	\begin{lemma}
		\label{lem:1}
		For every $\eps > 0$, it is R3SAT-hard to approximate the BBC within a factor of $\frac{1}{2} + \eps$. More specifically, it is R3SAT-hard to distinguish between
		the cases $k > (\frac{1}{4}-\eps)n$ and $k < (\frac{1}{8}+\eps)n$.
	\end{lemma}

	We now prove the main result. 
	\begin{theorem}
		For any $\eps < \frac{1}{64}$, KPB is R3SAT-hard to approximate to a ratio of $\frac{15}{16}+\eps$.
	\end{theorem}
	
	\begin{proof}
		we use a reduction from BBC. Let there be a bipartite graph $G = (L,R,E)$ such that $|L| = |R| = n$. Let $\bar{G}$ be the complement of $G$, i.e., $\bar{G} = (L,R, \bar{E} = L \times R \setminus E)$. Define the KPB instance $\mathcal{I}$ such that $L \cup R$ is the set of items, $\bar{G}$ is the conflict graph with the following attributes.
		
		\begin{enumerate}
			\item For all $\ell \in L$ let $p_{\ell} = 7$ and $s_{\ell} = 1$.
			
			\item For all $r \in R$ let $p_{r} = 1$ and $s_{r} = \frac{1}{2n}$.
			
			\item Let $c = \frac{n}{4}+\frac{1}{2}$ (the capacity of the knapsack).
		\end{enumerate}
		
		Now, assume that $k$ is the optimal solution to BBC in $G$. Let $[k] = \{1,2,\ldots,k\}$.  We split into two cases.
		
		\begin{itemize}
			\item $k > (\frac{1}{4}-\eps)n$. Assume without the loss of generality that $(\frac{1}{4}-\eps)n$ is an integer. Let $L_k = \{\ell_1, \ldots,\ell_k\} \subseteq L$ and $R_k = \{r_1, \ldots,r_k\} \subseteq R$ be the vertices in a maximum BBC in $G$. By the definition of a biclique it holds that $\{\ell_i,r_j\} \in E$ for all $i,j \in [k]$; thus, by the definition of $\bar{G}$ it holds that $\{\ell_i,r_j\} \notin \bar{E}$ for all $i,j \in [k]$.  It follows that $S_k = L_k \cup R_k$ is an independent set in $\bar{G}$. Let $S \subseteq S_k$ be defined by choosing $(\frac{1}{4}-\eps)n$ items from $L_k$ and $(\frac{1}{4}-\eps)n$ items from $R_k$. Since $S \subseteq S_k$ it follows that $S$ is an independent set in $\bar{G}$. We show that the total size of items in $S$ is at most the capacity of the knapsack. 
			
			$$\textsf{size}(S) = 1\cdot (\frac{1}{4}-\eps)n +\frac{1}{2n} \cdot (\frac{1}{4}-\eps)n    =  \frac{n}{4} -\eps n +\frac{1}{8}-\frac{\eps}{8} \leq \frac{n}{4}+\frac{1}{8}\leq \frac{n}{4}+\frac{1}{2} = c.$$

			Thus we conclude that $S$ is a feasible solution to $\mathcal{I}$. The weight of $S$ in $\mathcal{I}$ is:
			
			\begin{equation}
				\label{1}
				\textsf{weight}(S) =  |L_k| \cdot  p\left(\ell_1\right)+|R_k| \cdot p\left(r_1\right) = 7k+k = 8k > 8 (\frac{1}{4}-\eps)n = (2-\eps)n.
			\end{equation}

			\item $k < (\frac{1}{8}+\eps)n$. We find an upper bound on the weight of a solution $S$ for $\mathcal{I}$ in this case. We split into two cases as follows. 
			
			\begin{itemize}
				\item $|S \cap R| \leq k$. Then, it also holds that $|S \cap L| \leq \frac{n}{4}$ by the capacity of the knapsack (recall that $c = \frac{n}{4}+\frac{1}{2}$). Hence,
				
				\begin{equation}
					\label{2}
					\textsf{weight}(S) \leq 7 \cdot \frac{n}{4}+ 1 \cdot k \leq \frac{7n}{4}+k < \frac{7n}{4}+(\frac{1}{8}+\eps)n = (\frac{15}{8}+\eps)n.
				\end{equation}

				\item $|S \cap R| > k$. Then, it also holds that $|S \cap L| \leq k$ since $S$ is a biclique in $G$ and the maximal size of a biclique in $G$ is $k$; if $|S \cap L| > k$ then it follows there is $S' \subseteq S$ that is a biclique in $G$ with size larger than $k$ in contradiction. Hence,
				
				\begin{equation}
					\label{3}
					\textsf{weight}(S) \leq 1 \cdot |R|+7 \cdot k \leq n+7k < n+7(\frac{1}{8}+\eps)n = n+(\frac{7}{8}+7\eps)n = (\frac{15}{8}+7\eps)n.
				\end{equation}
			\end{itemize}
		\end{itemize}
		
		By Equations~\ref{1}~\ref{2}~\ref{3} and for any $\eps < \frac{1}{64}$, there is a gap of $\frac{15}{16}+\eps$ between the maximal weight depending on whether $k < (\frac{1}{8}+\eps)n$ or $k > (\frac{1}{4}-\eps)n$. The claim follows by Lemma~\ref{lem:1}.
	\end{proof}
}


\section{Capacitated MWIS in Bipartite Graphs}
\label{sec:cap}


In this section we give the proof of Theorem~\ref{lem:EPTAS} using a reduction from the balanced $k$-biclique problem parametrized by $k$ - the size of the maximal biclique in each side. The basic idea of the proof is as follows. Given a bipartite graph $G = (L,R,E)$ and the parameter $k$ we define a CMWIS instance on a bipartite graph $H$ having the complement set of edges to $E$ between $L$ and $R$.  We define the weight (or, the cost) and the budget such that there can be at most $2k$ vertices in a solution to the CMWIS instance; we also define the the weights asymmetrically between the two sides of $H$, such that an optimal solution to the CMWIS instance has exactly $k$ vertices from each side of $H$ if and only if there is a biclique in the original graph $G$.

\comment{	\begin{figure}
		\centering
		\vspace{-4cm} 
		\begin{tikzpicture}[scale=1.4, every node/.style={draw, circle, inner sep=1pt}]
			\node (11) at (5.5,4) {$\bf \textcolor{teal}{1}$};
			\node (22) at (7,4) {$\bf \textcolor{black}{2}$};
			\node (33) at (5.5,3) {$\bf \textcolor{red}{3}$};
			\node (44) at (7,3) {$\bf \textcolor{blue}{4}$};
			\draw (11) -- (22);
			\draw (11) -- (33);
			\draw (11) -- (44);
			\draw (22) -- (44);
			\draw (33) -- (44);
			
			\node[draw=none] at (9, 3.5) {$~\Pi = \{\textnormal{independent sets of } G\}$};
			
			\node[draw=none] at (4.5, 3.5) {A graph $G$};
			\node[draw=none] at (6.75, 2.5) {$\mathcal{J}_{4,2} = \{\emptyset, \{$\bf \textcolor{teal}{1}$\},\{$\bf \textcolor{black}{2}$\},\{$\bf \textcolor{red}{3}$\},\{$\bf \textcolor{blue}{4}$\}\}, ~~\mathcal{K}_{4,2,5} = \{\{$\bf \textcolor{black}{2}$,$\bf \textcolor{blue}{4}$\},\{$\bf \textcolor{black}{2}$,$\bf \textcolor{teal}{1}$\},\{$\bf \textcolor{blue}{4}$,$\bf \textcolor{red}{3}$\},\{$\bf \textcolor{red}{3}$,$\bf \textcolor{teal}{1}$\}\}, ~~\mathcal{L}_{4,2,5}(\Pi) = \{\{$\bf \textcolor{black}{2}$,$\bf \textcolor{red}{3}$\}\}$};
		\end{tikzpicture}
		\vspace{-7.5cm} 
		\caption{\label{fig:Pi} The independent sets of the $\Pi$-matroid $M_{n,k,\alpha}(\Pi)$, with parameters $n = 4$, $k = 2$, $\alpha = 5$. The secret set $\Pi$ contains all independent sets in the graph $G$, where $\{2,3\}$ is the only independent set in $G$ with $k$ elements.}
		
\end{figure}}

{\bf Proof of Theorem~\ref{lem:EPTAS}:}  
	We give a reduction from the $k$-biclique problem in bipartite graphs. In this problem, we are given a bipartite graph $G = (L,R,E)$ and a number $k \in \mathbb{N}_{>0}$; the goal is to decide if there is a balanced biclique $K_{k,k}$ in $G$. Let $G = (L,R,E)$  and $k$ be instance $U$ of the $k$-biclique problem in bipartite graphs.  We define the following CMWIS instance $I = (L,R,\bar{E},w,B)$ such that the following holds. \begin{enumerate}
	\item The edges are the bipartite-complement of $E$, that is $\bar{E} =  (L \times R) \setminus E$. 
	\item For all $\ell \in L$ define the weight as  $w(\ell) = 4k^2+2 \cdot k$.
		\item For all $r \in R$ define the weight as $w(r) = 4k^2+1$.
		\item Define the budget as $B = 8k^3 +2 \cdot k^2+k$. 
\end{enumerate} Let $H = (L,R,\bar{E})$ be the induced graph. We give an example of the construction in Figure~\ref{fig:X}. We use the following auxiliary claims. 

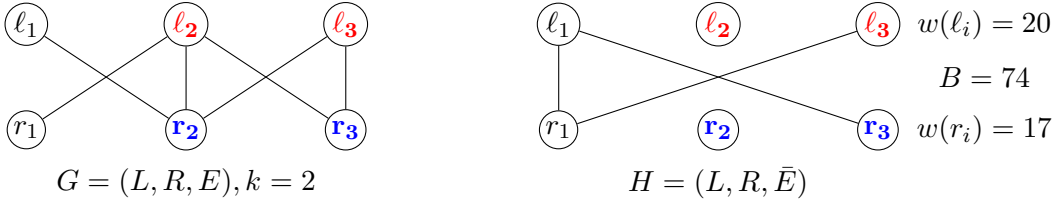
\begin{figure}
		\centering
		\begin{tikzpicture}[scale=1.4, every node/.style={draw, circle, inner sep=1pt}]
			\node (l1a) at (5.5,-0.5) {$\bf \textcolor{red}{\ell_2}$};
			\node (1) at (4,-0.5) {$\ell_1$};
			\node (l2a) at (7,-0.5) {$\bf \textcolor{red}{\ell_3}$};
			\node (2) at (4,-1.5) {$r_1$};
			
			\node (r1a) at (5.5,-1.5) {$\bf \textcolor{blue}{r_2}$};
			\node (r2a) at (7,-1.5) {$\bf \textcolor{blue}{r_3}$};
			\draw (l1a) -- (r1a);
			\draw (l2a) -- (r1a);
			\draw (l2a) -- (r2a);

			\draw (r2a) -- (l1a);
			\draw (2) -- (l1a);
			\draw (1) -- (r1a);

			\node (L2) at (10.5,-0.5) {$\bf \textcolor{red}{\ell_2}$};
			\node (L1) at (9,-0.5) {$\ell_1$};
			\node (L3) at (12,-0.5) {$\bf \textcolor{red}{\ell_3}$};
			\node (R1) at (9,-1.5) {$r_1$};
			
			\node (R2) at (10.5,-1.5) {$\bf \textcolor{blue}{r_2}$};
			\node (R3) at (12,-1.5) {$\bf \textcolor{blue}{r_3}$};
			
			\draw (L1) -- (R1);
			\draw (L3) -- (R1);
			\draw (L1) -- (R3);
			\node[draw=none] at (5.5, -2) {$G = (L,R,E), k = 2$};
			
			\node[draw=none] at (10.5, -2) {$H = (L,R,\bar{E})$};
			
			\node[draw=none] at (13, -1.5) {$w(r_i) = 17$};
			
			\node[draw=none] at (13, -0.5) {$w(\ell_i) = 20$};
			
			\node[draw=none] at (13, -1) {$B = 74$};
		\end{tikzpicture}
		\vspace{-1.5cm} 
		\caption{\label{fig:X} The construction of the reduced CMWIS instance (on the left) given the balanced biclique instance (on the right). The maximum balanced biclique in $G$ is $\{\ell_2,\ell_3,r_2,r_3\}$ (highlighted in red and blue); this is the only solution for the reduced CMWIS instance with weight exactly $B = 74$.}
	\end{figure}

\begin{claim}
	\label{claim:1}
If there is a balanced biclique $K_{k,k}$ in $G$ then there is a solution for $I$ of weight $B$. 
\end{claim}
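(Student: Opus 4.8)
The plan is to exhibit an explicit feasible solution of weight exactly $B$ built directly from the biclique. Suppose $G$ contains a balanced biclique $K_{k,k}$, and let $L' \subseteq L$ and $R' \subseteq R$ be its two sides, so $|L'| = |R'| = k$ and $(\ell, r) \in E$ for every $\ell \in L'$ and $r \in R'$. I would set $S = L' \cup R'$ and argue that $S$ is an independent set in $H = (L, R, \bar{E})$ whose weight equals the budget.

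First I would verify independence. Since $H$ is bipartite with parts $L$ and $R$, there are no edges inside $L$ and none inside $R$, so the only possible edges of $H$ within $S$ are between $L'$ and $R'$. But for each such pair $(\ell, r)$ with $\ell \in L'$, $r \in R'$, the biclique property gives $(\ell, r) \in E$, hence $(\ell, r) \notin \bar{E} = (L \times R) \setminus E$. Therefore $S$ contains no edge of $H$ and is independent.

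Next I would compute the weight. Using $w(\ell) = 4k^2 + 2k$ for each of the $k$ vertices in $L'$ and $w(r) = 4k^2 + 1$ for each of the $k$ vertices in $R'$,
\begin{equation*}
w(S) = k \cdot (4k^2 + 2k) + k \cdot (4k^2 + 1) = (4k^3 + 2k^2) + (4k^3 + k) = 8k^3 + 2k^2 + k = B.
\end{equation*}
Finally, since this is a CMWIS instance the cost function coincides with the weight function, so $c(S) = w(S) = B \le B$, meaning $S$ satisfies the budget constraint and is a valid solution of weight exactly $B$.

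There is essentially no hard step here: the argument is a direct translation of the biclique-to-independent-set correspondence under bipartite complementation, followed by an arithmetic check that the chosen weights were calibrated so that $k$ vertices from each side sum to precisely $B$. The only point requiring care is the complementation direction — that edges \emph{present} in $G$ become edges \emph{absent} in $\bar{E}$ — which is exactly what makes a biclique of $G$ an independent set of $H$.
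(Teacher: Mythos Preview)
Your proof is correct and follows essentially the same approach as the paper: take $S = L' \cup R'$ from the biclique, note it is independent in $H$ by bipartite complementation, and compute $w(S) = B$. Your version is slightly more explicit in justifying independence and in checking the budget constraint $c(S) = w(S) \le B$, but the argument is identical in substance.
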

\begin{claimproof}
	Let $(L',R') \subseteq L \times R$ be a balanced biclique of size $k$ from each side (i.e., $K_{k,k}$) in $G$. Define $S = L' \cup R'$. Observe that $S$ is an independent set in $H$. In addition, 
	$$w(S) = w(L')+w(R') = 8k^3+k \cdot (2k)+k \cdot  1 = 8k^3+2 \cdot k^2+k = B.$$ By the above, $S$ is a solution for $I$ of weight exactly $B$.

\end{claimproof}
\begin{claim}
	\label{claim:2}
	If there is a solution for $I$ of weight at least $B$ then there is a balanced biclique $K_{k,k}$ in $G$. 
\end{claim}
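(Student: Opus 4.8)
The plan is to run the reduction in reverse: start from a solution $S$ for $I$ of weight at least $B$ and extract from it a balanced biclique $K_{k,k}$ in $G$. The first observation I would record is the defining correspondence between the two graphs. Since $H = (L,R,\bar{E})$ has $\bar{E} = (L \times R) \setminus E$, a set $S$ is independent in $H$ exactly when no pair $(u,v)$ with $u \in S \cap L$ and $v \in S \cap R$ lies in $\bar{E}$; equivalently, every such pair lies in $E$. Hence $(S \cap L, S \cap R)$ is automatically a complete bipartite subgraph (a biclique) of $G$. So the whole task reduces to showing that a heavy feasible $S$ must contain exactly $k$ vertices on each side.

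Next I would set $a = |S \cap L|$ and $b = |S \cap R|$ and write $w(S) = a(4k^2+2k) + b(4k^2+1)$. Because $S$ is a solution, feasibility gives $w(S) = c(S) \leq B$; combined with the hypothesis $w(S) \geq B$ this pins $w(S) = B$ exactly. The heart of the argument is a two-sided counting squeeze on $a+b$. Each vertex weighs at least $4k^2+1$, so if $a+b \geq 2k+1$ then $w(S) \geq (2k+1)(4k^2+1) > B$, contradicting feasibility. Each vertex weighs at most $4k^2+2k$, so if $a+b \leq 2k-1$ then $w(S) \leq (2k-1)(4k^2+2k) < B$, contradicting $w(S) = B$. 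Therefore $a+b = 2k$.

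Finally, substituting $b = 2k-a$ into $w(S) = B$ yields a linear equation in $a$: after cancelling the common $8k^3$ term one is left with $a(2k-1) = k(2k-1)$, whence $a = k$ and $b = k$. Thus $(S \cap L, S \cap R)$ is a biclique of $G$ with $k$ vertices on each side, i.e.\ a balanced $K_{k,k}$, completing the proof of the claim (and, together with Claim~\ref{claim:1}, the equivalence needed for Theorem~\ref{lem:EPTAS}).

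I expect the only delicate point to be the design-checking inside the squeeze: one must verify that the additive gaps $2k$ (on the $L$-weights) and $1$ (on the $R$-weights) are simultaneously small enough that overshooting the vertex count by one already pushes the weight past $B$, yet large enough that the correct count $a+b = 2k$ admits only the balanced split $a = b = k$ at weight exactly $B$. These amount to checking $(2k+1)(4k^2+1) > B > (2k-1)(4k^2+2k)$ together with the exact identity at $a = b = k$; they are routine but are precisely where the chosen constants $4k^2+2k$, $4k^2+1$, and $B = 8k^3+2k^2+k$ earn their keep.
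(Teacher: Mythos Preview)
Your proof is correct and follows essentially the same route as the paper: both first squeeze $|S|$ to exactly $2k$ via the bounds $(2k+1)(4k^2+1)>B>(2k-1)(4k^2+2k)$, then pin down $|S\cap L|=|S\cap R|=k$, and finally use that independence in $H$ is exactly the biclique condition in $G$. The only cosmetic difference is in the last step: the paper argues by two further inequality cases ($|S\cap L|<k$ and $|S\cap L|>k$), whereas you substitute $b=2k-a$ into $w(S)=B$ and solve the resulting linear equation $a(2k-1)=k(2k-1)$ directly; your version is marginally cleaner (just note it uses $k\geq 1$ so $2k-1\neq 0$).
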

\begin{claimproof}
	
 For any $S \subseteq L \cup R$ such that $|S|> 2k$ it holds that $w(S) > B$:
 $$w(S) \geq (2k+1) \cdot (4k^2+1) = 8k^3+4k^2+2k+1> 8k^3 +2 \cdot k^2+k = B.$$
  Moreover,  for any $S \subseteq L \cup R$ such that $|S|< 2k$ it holds that $w(S)< B$:
  $$w(S) \leq (2k-1) \cdot (4k^2+2k) = 8k^3-4k^2+4k^2-2k = 8k^3 -2k < 8k^3 +2 \cdot k^2+k = B.$$
   Note that there cannot be a solution for $I$ of weight strictly larger than $B$ by the capacity constraint. Thus, we conclude that a solution $S$ for $I$ of weight at least $B$ satisfies $|S| = 2k$. Let $S$ be a solution for $I$ of weight $B$. 
    We now show that $S$ must be a balanced biclique. If $|S \cap L| < k$, then $w(S)< B$:
    \begin{equation}
    	\label{eq:s1}
    	\begin{aligned}
    	w(S) ={} & w(S \cap L)+w(S \cap R) \\
    	\leq{} & (k-1) \cdot (4k^2+2k)+(k+1) \cdot (4k^2+1)\\
    	={} & 4k^3-4k^2+2k^2-2k+4k^3+4k^2+k+1\\
    	={} &8k^3+2k^2-k+1\\
    		<{} &8k^3+2k^2+k\\
    		={} & B.
    	\end{aligned}
    \end{equation} The first inequality holds since $|S| = 2k$ because the weight of $S$ is $w(S) = B$; moreover, the expression $w(S)$ is maximized if there is a maximum number of vertices in $S$ from $L$. Alternatively, if $|S \cap L|>k$ then $w(S)>B$:
 \begin{equation}
	\label{eq:s2}
	\begin{aligned}
		w(S) ={} & w(S \cap L)+w(S \cap R) \\
		\geq{} & (k+1) \cdot (4k^2+2k)+(k-1) \cdot (4k^2+1)\\
		={} & 4k^3+4k^2+2k^2+2k+4k^3-4k^2+k-1\\
		={} &8k^3+2k^2+3k-1\\
		\geq{} &8k^3+2k^2+2k\\
				>{} &8k^3+2k^2+k\\
		={} & B.
	\end{aligned}
\end{equation} By \eqref{eq:s1} and \eqref{eq:s2} we conclude that $|S \cap L| = |S \cap R| = k >0$. Since $S$ is a solution for $I$ it holds that $S$ is independent set in $H$. By the definition of $H$, it implies that $S$ is a balanced biclique in $G$ of size $k$ from each side of the graph (i.e., $K_{k,k}$).

	
	%
\end{claimproof}
By Claim~\ref{claim:1} and Claim~\ref{claim:2}, there is a balanced biclique $K_{k,k}$ in $G$ if and only if there is a solution for $I$ of weight at least $B$. In addition, note that the construction of $I$ given $U$ can be computed in polynomial time in the encoding size of $U$. Therefore, if there is an EPTAS $\cA$ for CMWIS on bipartite graphs it can be used to decide $U$ in FPT time as explained below. Let $\eps = \frac{1}{12 \cdot k^3}$ be an error parameter. We compute $\cA$ on $I$ and $\eps$; let $S$ be the returned solution. Since $\cA$ is an EPTAS, it returns a $(1+\eps)$-approximation for $I$: \begin{equation}
	\label{eq:WS}
	w(S) \geq (1+\eps) \cdot \OPT(I) = \OPT(I)-\frac{\OPT}{12 \cdot k^3}> \OPT(I)-\frac{\OPT(I)}{B} \geq \OPT(I)-1.
\end{equation} Since $w(S) \in \mathbb{N}$, by \eqref{eq:WS} it holds that $w(S) = \OPT(I)$. Thus, we can decide $U$ by returning that there is a balanced biclique $K_{k,k}$ in $G$ if and only if $w(S) = B$. Note that the running time of computing $S$ is $f(\frac{1}{\eps}) \cdot |I|^{O(1)}$, where $|I|$ is the encoding size of $I$ and $f$ is some computable function. Since the construction of $I$ is polynomial in $|U|$ (the encoding size of $U$) and $\eps = \frac{1}{12 \cdot k^3}$, the running time of deciding $U$ is bounded by $f(12 \cdot k^3) \cdot |U|^{O(1)}$. Since the $k$-biclique problem in bipartite graphs is known to be W[1]-Hard \cite{lin2014parameterized}, we conclude that CMWIS in bipartite graphs is also W[1]-Hard.  \qed


\section{Discussion}
\label{sec:discussion}
In this paper we showed that the budgeted maximum weight independent set (BMWIS) problem admits a tight $2$ approximation. Our main result is a lower bound of $2-\eps$ already for the special case of a bipartite graph with uniform costs, where the previous lower bound was strong NP-hardness \cite{pferschy2009knapsack}. We also showed that there is a tight $2$-approximation for the BMWIS problem on perfect graphs, using a techniques of \cite{kulik2021lagrangian}. This resolves the complexity status of the considered problems. In fact, to the best of our knowledge, our results give the first example for which the technique of  \cite{kulik2021lagrangian} yields a tight approximation. Our paper also shows that the capacitated maximum weight independent set (CMWIS) in bipartite graphs is unlikely to admit an EPTAS; this gives a tight lower bound as CMWIS on perfect graphs admits a PTAS. 

Our hardness result may have implications for other problems. For example, in many packing problems approximation algorithms often rely on linear programs with an exponential number of variables, called {\em configuration LPs} (see, e.g., \cite{bansal2014improved,fleischer2011tight}). The bottleneck of solving the standard configuration LP for {\em bin packing with a conflict graph} 
is a BMWIS problem with the same conflict graph. Thus, our results suggest that the standard configuration LP is unlikely to yield better approximations for bin packing with bipartite \cite{huang2023improved} and perfect conflict graphs \cite{doron2023approximating}.  

For other graph families, the complexity status of BMWIS remains open. In particular, similar to Algorithm~\ref{alg:1}, we can use the techniques of \cite{kulik2021lagrangian} for other graph classes such as $d$-claw free graphs and graphs of bounded degree $\Delta$, applying as a black box the MWIS algorithms for these graph classes \cite{neuwohner2021improved,halldorsson1998approximations}. 
As $d$ or $\Delta$ grow large, the approximation guarantees for both graph classes approach the known guarantees for MWIS on these graphs~\cite{neuwohner2021improved,halldorsson1998approximations}. 
However, for small values of $d$ and $\Delta$ (e.g., $d = 4$ and $\Delta = 3$), there is a significant gap between the approximation guarantee of the MWIS algorithms and the above BMWIS algorithms. 
It would be interesting to bridge these gaps, either by using stronger approximation algorithms or by providing better lower bounds. 

\noindent {\bf Acknowledgments:} We thank Magnus Halld{\'o}rsson for helpful comments and suggestions. We also thank Pasin Manurangsi for a personal communication which led to the statement of 
Lemma~\ref{lem:Pas}.  
\bibliographystyle{splncs04}
\bibliography{bibfile}



\end{document}